\newcommand{\knote}[1]{\todo[inline, color=blue!20]{#1}}
\definecolor{light-gray}{gray}{0.86}
\newcommand{\mylet}[2]{\mathbf{let}\ #1 \ \mathbf{in} \ #2}
\newcommand{\remph}[1]{\textcolor{blue}{#1}}
   \newcommand\SkipToFmtEnd{}%
   \newcommand\EndFmtInput{}%
   \long\def\SkipToFmtEnd#1\EndFmtInput{}%
\newcommand\ReadOnlyOnce[1]{\@ifundefined{#1}{\@namedef{#1}{}}\SkipToFmtEnd}
\DeclareFontFamily{OT1}{cmtex}{}
\DeclareFontShape{OT1}{cmtex}{m}{n}
  {<5><6><7><8>cmtex8
   <9>cmtex9
   <10><10.95><12><14.4><17.28><20.74><24.88>cmtex10}{}
\DeclareFontShape{OT1}{cmtex}{m}{it}
  {<-> ssub * cmtt/m/it}{}
\DeclareFontShape{OT1}{cmtt}{bx}{n}
  {<5><6><7><8>cmtt8
   <9>cmbtt9
   <10><10.95><12><14.4><17.28><20.74><24.88>cmbtt10}{}
\DeclareFontShape{OT1}{cmtex}{bx}{n}
  {<-> ssub * cmtt/bx/n}{}
\newcommand{\Conid}[1]{\mathit{#1}}
\newcommand{\Varid}[1]{\mathit{#1}}
\newcommand{\anonymous}{\kern0.06em \vbox{\hrule\@width.5em}}
\renewcommand{\leq}{\leqslant}
\renewcommand{\geq}{\geqslant}
\newdimen\mathindent\mathindent\leftmargini}%
\def\resethooks{%
  \global\let\SaveRestoreHook\empty
  \global\let\ColumnHook\empty}
\newcommand*{\savecolumns}[1][default]%
  {\g@addto@macro\SaveRestoreHook{\savecolumns[#1]}}
\newcommand*{\restorecolumns}[1][default]%
  {\g@addto@macro\SaveRestoreHook{\restorecolumns[#1]}}
\newcommand*{\aligncolumn}[2]%
  {\g@addto@macro\ColumnHook{\column{#1}{#2}}}
\newcommand{\onelinecommentchars}{\quad-{}- }
\newcommand{\commentbeginchars}{\enskip\{-}
\newcommand{\commentendchars}{-\}\enskip}
\newcommand{\visiblecomments}{%
  \let\onelinecomment=\onelinecommentchars
  \let\commentbegin=\commentbeginchars
  \let\commentend=\commentendchars}
\newcommand{\invisiblecomments}{%
  \let\onelinecomment=\empty
  \let\commentbegin=\empty
  \let\commentend=\empty}
\newlength{\blanklineskip}
\newcommand{\hsindent}[1]{\quad}
\let\hspre\empty
\let\hspost\empty
\newcommand{\hsnewpar}[1]%
  {{\parskip=0pt\parindent=0pt\par\vskip #1\noindent}}
\newcommand{\hscodestyle}{}
\newcommand{\sethscode}[1]%
  {\expandafter\let\expandafter\hscode\csname #1\endcsname
   \expandafter\let\expandafter\endhscode\csname end#1\endcsname}
\newcommand{\plainhs}{\sethscode{plainhscode}}
\def\codeframewidth{\arrayrulewidth}
   \let\endoflinesave=\\
   \framedhslinecorrect\endoflinesave{.5ex}\hline
\newcommand{\framedhslinecorrect}[2]%
  {#1[#2]}
\def\column##1##2{}%
   \newcommand\>[1][]{}\newcommand\<[1][]{}\newcommand\\[1][]{}%
   \def\fromto##1##2##3{##3}%
\let\orighscode=\hscode
   \let\origendhscode=\endhscode
   \def\endhscode{\def\hscode{\endgroup\def\@currenvir{hscode}\\}\begingroup}
\def\hscode{\endgroup\def\@currenvir{hscode}}}%
   \global\let\hscode=\orighscode
   \global\let\endhscode=\origendhscode}%
\let\HaskellResetHook\empty
\newcommand*{\AtHaskellReset}[1]{%
  \g@addto@macro\HaskellResetHook{#1}}
\newcommand*{\HaskellReset}{\HaskellResetHook}
\newcommand\hsforall{\global\let\hsdot=\hsperiodonce}
\newcommand*\hsperiodonce[2]{#2\global\let\hsdot=\hscompose}
\newcommand*\hscompose[2]{#1}
\begin{document}
\title{Proof Relevant Corecursive Resolution}
\author{Peng Fu\inst{1}\thanks{This author is supported by EPSRC grant EP/K031864/1.}, Ekaterina Komendantskaya\inst{1}, Tom Schrijvers\inst{2}, Andrew Pond\inst{1}\thanks{This author is supported by Carnegie Trust Scotland.}}
\institute{Computer Science, University of Dundee \and
           Department of Computer Science, KU Leuven }

\maketitle
\begin{abstract}
Resolution lies at the foundation of both logic programming and type class
context reduction in functional languages.
Terminating derivations by resolution have well-defined inductive meaning,
whereas some non-terminating derivations can be understood coinductively. 
Cycle detection is a popular method to capture a small subset of such
derivations.  We show that in fact cycle detection is a restricted form of
coinductive proof, in which the atomic formula forming the cycle plays the
r\^{o}le of coinductive hypothesis.

This paper introduces a heuristic method for obtaining richer coinductive
hypotheses in the form of Horn formulas. Our approach subsumes cycle detection
and gives coinductive meaning to a larger class of derivations. For this
purpose we extend resolution with Horn formula resolvents and  corecursive
evidence generation. We illustrate our method on non-terminating type class
resolution problems.

\textbf{Keywords:} Horn Clause Logic, Resolution, Corecursion, Haskell Type Class Inference, Coinductive Proofs.

\end{abstract}

\section{Introduction}
\label{intro}

Horn clause logic is a fragment of first-order logic known for its simple syntax, well-defined models, and efficient algorithms for automated proof search.
It is used in a variety of applications, from program verification~\cite{BjornerGMR15} to type inference in object-oriented programming languages ~\cite{AnconaLagorio11}.
Similar syntax and proof methods underlie type inference in functional programming languages~\cite{wadler1989make,Lammel:2005}.
For example, the following declaration
specifies equality class instances for pairs and integers in Haskell:

\begin{hscode}\SaveRestoreHook
\column{B}{@{}>{\hspre}l<{\hspost}@{}}%
\column{3}{@{}>{\hspre}l<{\hspost}@{}}%
\column{E}{@{}>{\hspre}l<{\hspost}@{}}%
\>[3]{}\mathbf{instance}\;\Conid{Eq}\;\Conid{Int}\;\mathbf{where}\mathbin{...}{}\<[E]%
\\
\>[3]{}\mathbf{instance}\;(\Conid{Eq}\;\Varid{x},\Conid{Eq}\;\Varid{y})\Rightarrow \Conid{Eq}\;(\Varid{x},\Varid{y})\;\mathbf{where}\mathbin{...}{}\<[E]%
\ColumnHook
\end{hscode}\resethooks
\noindent It corresponds to a Horn clause program $\Phi_{\ensuremath{\Conid{Pair}}}$ with two clauses $\kappa_{\ensuremath{\Conid{Int}}}$ and $\kappa_{\ensuremath{\Conid{Pair}}}$: 

\begin{hscode}\SaveRestoreHook
\column{B}{@{}>{\hspre}l<{\hspost}@{}}%
\column{3}{@{}>{\hspre}l<{\hspost}@{}}%
\column{17}{@{}>{\hspre}l<{\hspost}@{}}%
\column{E}{@{}>{\hspre}l<{\hspost}@{}}%
\>[3]{}\kappa_\Conid{Int}\mathbin{:}{}\<[17]%
\>[17]{}\Conid{Eq}\;\Conid{Int}{}\<[E]%
\\
\>[3]{}\kappa_\Conid{Pair}\mathbin{:}(\Conid{Eq}\;\Varid{x},\Conid{Eq}\;\Varid{y})\Rightarrow \Conid{Eq}\;(\Varid{x},\Varid{y}){}\<[E]%
\ColumnHook
\end{hscode}\resethooks



Horn clause logic uses SLD-resolution as an inference engine.
If a derivation for a given formula $A$ and a Horn clause program $\Phi$ terminates successfully with substitution $\theta$,
then $\theta A$ is logically entailed by $\Phi$, or $\Phi \vdash \theta A$. The search for a suitable $\theta$ reflects the problem-solving nature of SLD-resolution. 
When the unification algorithm underlying SLD-resolution is restricted to matching, resolution can be viewed as theorem proving:
the successful terminating derivations for $A$ using $\Phi$ will guarantee $\Phi \vdash A$.
For example, $ \ensuremath{\Conid{Eq}\;(\Conid{Int},\Conid{Int})} \leadsto  \ensuremath{\Conid{Eq}\;\Conid{Int}},  \ensuremath{\Conid{Eq}\;\Conid{Int}}   \leadsto  \ensuremath{\Conid{Eq}\;\Conid{Int}}  \leadsto \emptyset$.
Therefore, we have: 
$\Phi_{\ensuremath{\Conid{Pair}}} \vdash \ensuremath{\Conid{Eq}\;(\Conid{Int},\Conid{Int})}$.
For the purposes of this paper, we always assume resolution by term-matching. 

To emphasize the proof-theoretic meaning of resolution, we will record 
proof evidence alongside the derivation steps. For instance, \ensuremath{\Conid{Eq}\;(\Conid{Int},\Conid{Int})} is proven by applying the clauses $\kappa_{\ensuremath{\Conid{Pair}}}$ and $\kappa_{\ensuremath{\Conid{Int}}}$.
We denote this by $\Phi_{\ensuremath{\Conid{Pair}}} \vdash \ensuremath{\Conid{Eq}\;(\Conid{Int},\Conid{Int})}  \Downarrow
\kappa_{\ensuremath{\Conid{Pair}}}\ \kappa_{\ensuremath{\Conid{Int}}}\ \kappa_{\ensuremath{\Conid{Int}}}$. 


Horn clause logic can have inductive and coinductive interpretation, via the least and greatest fixed points of the \emph{consequence operator} $F_{\Phi}$. 
Given a Horn clause program $\Phi$, and a set $S$ containing (ground) formulas formed from the signature of $\Phi$,
$F_{\Phi}(S) = \{\sigma A\ | \ \sigma B_1, \ldots , \sigma B_n \in S \textrm{ and } B_1, \ldots B_n \Rightarrow A \in \Phi  \}$~\cite{Llo87}. 
Through the Knaster-Tarski construction, the least fixed point of this operator gives the set of all finite ground formulas \emph{inductively entailed} by $\Phi$.
Extending $S$ to include infinite terms,
 the greatest fixed point of $F_{\Phi}$ defines the set of all finite and infinite ground formulas \emph{coinductively entailed} by $\Phi$.

Inductively, SLD-resolution is sound: if $\Phi \vdash A$, then $A$ is inductively entailed by $\Phi$.
It is more difficult to characterise coinductive entailment computationally;
several alternative approaches 
exist~\cite{SimonBMG07,Llo87,KJ15}.
So far the most
popular solution has been to use cycle detection~\cite{SimonBMG07}: if a cycle
is found in a derivation for a formula $A$ and a Horn clause program $\Phi$,
then $A$ is coinductively entailed by $\Phi$.

Consider, as an example, the following Horn clause program $\Phi_{AB}$:
\begin{hscode}\SaveRestoreHook
\column{B}{@{}>{\hspre}l<{\hspost}@{}}%
\column{3}{@{}>{\hspre}l<{\hspost}@{}}%
\column{E}{@{}>{\hspre}l<{\hspost}@{}}%
\>[3]{}\kappa_\Conid{A}\mathbin{:}\Conid{B}\;\Varid{x}\Rightarrow \Conid{A}\;\Varid{x}{}\<[E]%
\\
\>[3]{}\kappa_\Conid{B}\mathbin{:}\Conid{A}\;\Varid{x}\Rightarrow \Conid{B}\;\Varid{x}{}\<[E]%
\ColumnHook
\end{hscode}\resethooks
It gives rise to an infinite derivation $ A\ x \leadsto B\ x  \leadsto A\ x \leadsto \ldots $. 
By noticing the cycle, 
we can conclude that (an instance) of $A\ x$ is coinductively entailed by $\Phi_{AB}$.
We are able to \emph{construct} a  proof evidence that reflects the circular nature of this derivation:  $\alpha = \kappa_A\ (\kappa_B \ \alpha)$.
This being a corecursive equation, we can represent it with the standard $\mu$ operator, $\mu \alpha.  \kappa_A\ (\kappa_B \ \alpha)$. Now we have
$\Phi_{AB} \vdash A\ x  \Downarrow \mu \alpha.  \kappa_A\ (\kappa_B \ \alpha)$.

According to Gibbons and Hutton~\cite{GH05} and inspired by Moss and Danner~\cite{MD97}, a corecursive program is defined to be a function whose range is a type defined recursively as the greatest solution of some equation (i.e. whose range is a coinductive type).
We can informally understand the Horn clause $\Phi_{AB}$ as the following Haskell data type declarations: 

\begin{hscode}\SaveRestoreHook
\column{B}{@{}>{\hspre}l<{\hspost}@{}}%
\column{3}{@{}>{\hspre}l<{\hspost}@{}}%
\column{20}{@{}>{\hspre}c<{\hspost}@{}}%
\column{20E}{@{}l@{}}%
\column{23}{@{}>{\hspre}l<{\hspost}@{}}%
\column{24}{@{}>{\hspre}l<{\hspost}@{}}%
\column{E}{@{}>{\hspre}l<{\hspost}@{}}%
\>[3]{}\mathbf{data}\;\Conid{B}\;\Varid{x}{}\<[20]%
\>[20]{}\mathrel{=}{}\<[20E]%
\>[23]{}\Conid{K_B}\;(\Conid{A}\;\Varid{x}){}\<[E]%
\\
\>[3]{}\mathbf{data}\;\Conid{A}\;\Varid{x}{}\<[20]%
\>[20]{}\mathrel{=}{}\<[20E]%
\>[24]{} \Conid{K_A}\;(\Conid{B}\;\Varid{x}){}\<[E]%
\ColumnHook
\end{hscode}\resethooks
\noindent So the corecursive evidence $\mu \alpha.  \kappa_A\ (\kappa_B \ \alpha)$ for $A\ x$ corresponds
to the corecursive program $(d\ :: A\ x) =\ K_A \ (K_B \ d)$.
In our case, the corecursive evidence $d$ is that function, and its range type $A \ x$ can be seen as a coinductive type. 

Corecursion also arises in type class inference.
Consider the following mutually recursive definitions of lists of even
and odd length in Haskell:
\begin{hscode}\SaveRestoreHook
\column{B}{@{}>{\hspre}l<{\hspost}@{}}%
\column{3}{@{}>{\hspre}l<{\hspost}@{}}%
\column{20}{@{}>{\hspre}c<{\hspost}@{}}%
\column{20E}{@{}l@{}}%
\column{23}{@{}>{\hspre}l<{\hspost}@{}}%
\column{24}{@{}>{\hspre}l<{\hspost}@{}}%
\column{E}{@{}>{\hspre}l<{\hspost}@{}}%
\>[3]{}\mathbf{data}\;\Conid{OddList}\;\Varid{a}{}\<[20]%
\>[20]{}\mathrel{=}{}\<[20E]%
\>[23]{}\Conid{OCons}\;\Varid{a}\;(\Conid{EvenList}\;\Varid{a}){}\<[E]%
\\
\>[3]{}\mathbf{data}\;\Conid{EvenList}\;\Varid{a}{}\<[20]%
\>[20]{}\mathrel{=}{}\<[20E]%
\>[24]{}\Conid{Nil}\mid \Conid{ECons}\;\Varid{a}\;(\Conid{OddList}\;\Varid{a}){}\<[E]%
\ColumnHook
\end{hscode}\resethooks
They give rise to \ensuremath{\Conid{Eq}} type class instance declarations that can be expressed
using the following Horn clause program $\Phi_{EvenOdd}$:  
\begin{hscode}\SaveRestoreHook
\column{B}{@{}>{\hspre}l<{\hspost}@{}}%
\column{3}{@{}>{\hspre}l<{\hspost}@{}}%
\column{E}{@{}>{\hspre}l<{\hspost}@{}}%
\>[3]{}\kappa_\Conid{Odd}\mathbin{:}(\Conid{Eq}\;\Varid{a},\Conid{Eq}\;(\Conid{EvenList}\;\Varid{a}))\Rightarrow \Conid{Eq}\;(\Conid{OddList}\;\Varid{a}){}\<[E]%
\\
\>[3]{}\kappa_\Conid{Even}\mathbin{:}(\Conid{Eq}\;\Varid{a},\Conid{Eq}\;(\Conid{OddList}\;\Varid{a}))\Rightarrow \Conid{Eq}\;(\Conid{EvenList}\;\Varid{a}){}\<[E]%
\ColumnHook
\end{hscode}\resethooks
When resolving the type class constraint \ensuremath{\Conid{Eq}\;(\Conid{OddList}\;\Conid{Int})}, Haskell's standard type class resolution diverges.
The state-of-the-art is to use cycle detection~\cite{Lammel:2005} to terminate otherwise infinite
derivations. 
Resolution for \ensuremath{\Conid{Eq}\;(\Conid{OddList}\;\Conid{Int})} exhibits a cycle on the atomic formula  \ensuremath{\Conid{Eq}\;(\Conid{OddList}\;\Conid{Int})}, thus the derivation can be terminated,
with corecursive evidence  $\mu \alpha .  \ensuremath{\kappa_\Conid{Odd}\;\kappa_\Conid{Int}\;(\kappa_\Conid{Even}\;\kappa_\Conid{Int}\;\alpha)}$.


The method of cycle detection is rather limited: there are many Horn clause
programs that have coinductive meaning, but do not give rise to
detectable cycles. For example, consider the program $\Phi_Q$:
\begin{hscode}\SaveRestoreHook
\column{B}{@{}>{\hspre}l<{\hspost}@{}}%
\column{3}{@{}>{\hspre}l<{\hspost}@{}}%
\column{15}{@{}>{\hspre}l<{\hspost}@{}}%
\column{E}{@{}>{\hspre}l<{\hspost}@{}}%
\>[3]{}\kappa_\Conid{S}\mathbin{:}(\Conid{Q}\;(\Conid{S}\;(\Conid{G}\;\Varid{x})),\Conid{Q}\;\Varid{x})\Rightarrow \Conid{Q}\;(\Conid{S}\;\Varid{x}){}\<[E]%
\\
\>[3]{}\kappa_\Conid{G}\mathbin{:}\Conid{Q}\;\Varid{x}\Rightarrow \Conid{Q}\;(\Conid{G}\;\Varid{x}){}\<[E]%
\\
\>[3]{}\kappa_\Conid{Z}\mathbin{:}{}\<[15]%
\>[15]{}\Conid{Q}\;\Conid{Z}{}\<[E]%
\ColumnHook
\end{hscode}\resethooks
It gives rise to the following derivation without cycling: 

\noindent $\underline{Q\ (S\ Z)} \leadsto Q\ Z, \underline{Q\ (S\ (G\ Z))} \leadsto Q\ Z, Q\ (G\ Z), \underline{Q\ (S \ (G\ (G\ Z)))}  \leadsto \ldots$.
When such derivations arise, we cannot terminate the derivation by cycle detection. 

Let us look at a similar situation for type classes. Consider a datatype-generic representation of perfect trees: a
nested datatype~\cite{bird1998nested}, with  fixpoint \ensuremath{\Conid{Mu}} of the higher-order functor \ensuremath{\Conid{HPTree}} \cite{johann2009haskell}.  
\begin{hscode}\SaveRestoreHook
\column{B}{@{}>{\hspre}l<{\hspost}@{}}%
\column{3}{@{}>{\hspre}l<{\hspost}@{}}%
\column{16}{@{}>{\hspre}c<{\hspost}@{}}%
\column{16E}{@{}l@{}}%
\column{19}{@{}>{\hspre}l<{\hspost}@{}}%
\column{E}{@{}>{\hspre}l<{\hspost}@{}}%
\>[3]{}\mathbf{data}\;\Conid{Mu}\;\Varid{h}\;\Varid{a}{}\<[16]%
\>[16]{}\mathrel{=}{}\<[16E]%
\>[19]{}\Conid{In}\;\{\mskip1.5mu \Varid{out}\mathbin{::}\Varid{h}\;(\Conid{Mu}\;\Varid{h})\;\Varid{a}\mskip1.5mu\}{}\<[E]%
\\[\blanklineskip]%
\>[3]{}\mathbf{data}\;\Conid{HPTree}\;\Varid{f}\;\Varid{a}\mathrel{=}\Conid{HPLeaf}\;\Varid{a}\mid \Conid{HPNode}\;(\Varid{f}\;(\Varid{a},\Varid{a})){}\<[E]%
\ColumnHook
\end{hscode}\resethooks
These two datatypes give rise to the following \ensuremath{\Conid{Eq}} type class instances.
\begin{hscode}\SaveRestoreHook
\column{B}{@{}>{\hspre}l<{\hspost}@{}}%
\column{3}{@{}>{\hspre}l<{\hspost}@{}}%
\column{5}{@{}>{\hspre}l<{\hspost}@{}}%
\column{16}{@{}>{\hspre}c<{\hspost}@{}}%
\column{16E}{@{}l@{}}%
\column{19}{@{}>{\hspre}c<{\hspost}@{}}%
\column{19E}{@{}l@{}}%
\column{20}{@{}>{\hspre}l<{\hspost}@{}}%
\column{22}{@{}>{\hspre}l<{\hspost}@{}}%
\column{31}{@{}>{\hspre}c<{\hspost}@{}}%
\column{31E}{@{}l@{}}%
\column{34}{@{}>{\hspre}l<{\hspost}@{}}%
\column{E}{@{}>{\hspre}l<{\hspost}@{}}%
\>[3]{}\mathbf{instance}\;\Conid{Eq}\;(\Varid{h}\;(\Conid{Mu}\;\Varid{h})\;\Varid{a})\Rightarrow \Conid{Eq}\;(\Conid{Mu}\;\Varid{h}\;\Varid{a})\;\mathbf{where}{}\<[E]%
\\
\>[3]{}\hsindent{2}{}\<[5]%
\>[5]{}\Conid{In}\;\Varid{x}\equiv \Conid{In}\;\Varid{y}{}\<[19]%
\>[19]{}\mathrel{=}{}\<[19E]%
\>[22]{}\Varid{x}\equiv \Varid{y}{}\<[E]%
\\[\blanklineskip]%
\>[3]{}\mathbf{instance}\;(\Conid{Eq}\;\Varid{a},\Conid{Eq}\;(\Varid{f}\;(\Varid{a},\Varid{a})))\Rightarrow \Conid{Eq}\;(\Conid{HPTree}\;\Varid{f}\;\Varid{a})\;\mathbf{where}{}\<[E]%
\\
\>[3]{}\hsindent{2}{}\<[5]%
\>[5]{}\Conid{HPLeaf}\;\Varid{x}{}\<[16]%
\>[16]{}\equiv {}\<[16E]%
\>[20]{}\Conid{HPLeaf}\;\Varid{y}{}\<[31]%
\>[31]{}\mathrel{=}{}\<[31E]%
\>[34]{}\Varid{x}\equiv \Varid{y}{}\<[E]%
\\
\>[3]{}\hsindent{2}{}\<[5]%
\>[5]{}\Conid{HPNode}\;\Varid{xs}{}\<[16]%
\>[16]{}\equiv {}\<[16E]%
\>[20]{}\Conid{HPNode}\;\Varid{ys}{}\<[31]%
\>[31]{}\mathrel{=}{}\<[31E]%
\>[34]{}\Varid{xs}\equiv \Varid{ys}{}\<[E]%
\\
\>[3]{}\hsindent{2}{}\<[5]%
\>[5]{}\anonymous {}\<[16]%
\>[16]{}\equiv {}\<[16E]%
\>[20]{}\anonymous {}\<[31]%
\>[31]{}\mathrel{=}{}\<[31E]%
\>[34]{}\Conid{False}{}\<[E]%
\ColumnHook
\end{hscode}\resethooks
The corresponding Horn clause program $\Phi_{HPTree}$ consists of $\Phi_{Pair}$ and the following two clauses :
\begin{hscode}\SaveRestoreHook
\column{B}{@{}>{\hspre}l<{\hspost}@{}}%
\column{3}{@{}>{\hspre}l<{\hspost}@{}}%
\column{E}{@{}>{\hspre}l<{\hspost}@{}}%
\>[3]{}\kappa_\Conid{Mu}\mathbin{:}\Conid{Eq}\;(\Varid{h}\;(\Conid{Mu}\;\Varid{h})\;\Varid{a})\Rightarrow \Conid{Eq}\;(\Conid{Mu}\;\Varid{h}\;\Varid{a}){}\<[E]%
\\
\>[3]{}\kappa_\Conid{HPTree}\mathbin{:}(\Conid{Eq}\;\Varid{a},\Conid{Eq}\;(\Varid{f}\;(\Varid{a},\Varid{a})))\Rightarrow \Conid{Eq}\;(\Conid{HPTree}\;\Varid{f}\;\Varid{a}){}\<[E]%
\ColumnHook
\end{hscode}\resethooks
\noindent The type class resolution for \ensuremath{\Conid{Eq}\;(\Conid{Mu}\;\Conid{HPTree}\;\Conid{Int})} cannot be terminated by cycle 
detection. Instead we get a context reduction overflow error in the Glasgow Haskell Compiler, even if we just compare two finite 
data structures of the type \ensuremath{\Conid{Mu}\;\Conid{HPTree}\;\Conid{Int}}. 

To find a solution to the above problems, 
let us view infinite resolution from 
the perspective of coinductive proof in the Calculus of Coinductive Constructions
\cite{coquand1994infinite,gimenez1996}.
There, in order to prove a proposition $F$ from the assumptions $F_1, .., F_n$,
the proof may involve not only natural deduction and lemmas, but also $F$, provided the use of $F$ is \emph{guarded}.
We could say that the existing cycle detection methods treat the atomic formula forming a cycle
as a
\emph{coinductive hypothesis}.
We can equivalently describe the above-explained derivation for $\Phi_{AB}$ in the following terms:
when a cycle with a formula \ensuremath{\Conid{A}\;\Varid{x}} is found in the derivation, $\Phi_{AB}$ gets extended with a coinductive hypothesis $\alpha:\  \ensuremath{\Conid{A}\;\Varid{x}} $.
So to prove \ensuremath{\Conid{A}\;\Varid{x}} coinductively, we would need to apply the clause $\kappa_{A}$ first, and then clause $\kappa_B$, finally apply the coinductive hypothesis.
The resulting proof witness is $\mu \alpha. \  \kappa_A \ (\kappa_B\ \alpha)$.



The next logical step we can make is to use the above formalism to extend the syntax of the coinductive hypotheses. %
While cycle detection only uses atomic formulas as coinductive hypotheses, we can try to generalise the syntax of coinductive hypotheses to full Horn formulas. 


For example, for program $\Phi_Q$, we could prove a lemma
\ensuremath{\Varid{e}\mathbin{:}\Conid{Q}\;\Varid{x}\Rightarrow \Conid{Q}\;(\Conid{S}\;\Varid{x})} coinductively, 
which would allow us to form finite derivation for \ensuremath{\Conid{Q}\;(\Conid{S}\;\Conid{Z})}, which is described by $(e \ \kappa_Z)$.
The proof of \ensuremath{\Varid{e}\mathbin{:}\Conid{Q}\;\Varid{x}\Rightarrow \Conid{Q}\;(\Conid{S}\;\Varid{x})} is of a coinductive nature: if we first assume
 $\alpha : \ensuremath{\Conid{Q}\;\Varid{x}\Rightarrow \Conid{Q}\;(\Conid{S}\;\Varid{x})}$ and $\alpha_1 : Q \ C$, then all we need to show
is \ensuremath{\Conid{Q}\;(\Conid{S}\;\Conid{C})}.\footnote{Note that here \ensuremath{\Conid{C}} is an eigenvariable.}
To show \ensuremath{\Conid{Q}\;(\Conid{S}\;\Conid{C})}, we apply $\kappa_S$, which gives us \ensuremath{\Conid{Q}\;\Conid{C}}, \ensuremath{\Conid{Q}\;(\Conid{S}\;(\Conid{G}\;\Conid{C}))}. We first discharge \ensuremath{\Conid{Q}\;\Conid{C}}
 with $\alpha_1$ and then  apply the coinductive hypothesis $\alpha$ which 
 yields \ensuremath{\Conid{Q}\;(\Conid{G}\;\Conid{C})}, and can be proved with $\kappa_G$ and $\alpha_1$. So we
have obtained a coinductive proof for $e$, which is $\mu \alpha .\lambda
\alpha_1 . \kappa_S\ (\alpha\ (\kappa_G\ \alpha_1))\ \alpha_1$. We can apply
similar reasoning 
to show that $\Phi_{\ensuremath{\Conid{HPTree}}} \vdash \ensuremath{\Conid{Eq}\;(\Conid{Mu}\;\Conid{HPTree}\;\Conid{Int})} \  \Downarrow (\mu \alpha . \lambda \alpha_1 . \kappa_{\ensuremath{\Conid{Mu}}}\ (\kappa_{\ensuremath{\Conid{HPTree}}}\ \alpha_1\ (\alpha\  (\kappa_{\ensuremath{\Conid{Pair}}}\ \alpha_1\ \alpha_1)))) \ \kappa_{\ensuremath{\Conid{Int}}}$
using the coinductively proved lemma \ensuremath{\Conid{Eq}\;\Varid{x}\Rightarrow \Conid{Eq}\;(\Conid{Mu}\;\Conid{HPTree}\;\Varid{x})} \footnote{The proof term can be type-checked with polymorphic recursion.}.

To formalise the above intuitions, we need to solve several technical problems.

 
  \textit{1. How to generate suitable lemmas?} We propose to observe a more general notion of a loop invariant than a cycle in the non-terminating resolution. 
  In Section \ref{inv} we devise a heuristic method to identify potential loops in the resolution tree and extract \textit{candidate lemmas} in Horn clause form. 

\noindent Generally, it is very challenging to develop a practical method for generating
candidate lemmas based on loop analysis, since the admissibility of a loop in reduction is a semi-decidable problem~\cite{zantema1996non}.

 \textit{2. How to enrich resolution to allow coinductive proofs for Horn formulas? and how to formalise the corecursive proof evidence construction?} Coinductive proofs involve not 
only applying the axioms, but also modus ponens and generalization. Therefore, the resolution
mechanism will have to be extended in order to support such automation.   

\noindent In Section \ref{corec}, we introduce proof relevant \emph{corecursive resolution} -- a calculus that extends the standard resolution rule with two further rules:
one allows us to resolve Horn formula queries, and the other to construct corecursive proof evidence  for non-terminating resolution.


  \emph{3. How to give an operational semantics to the evidence produced by corecursive resolution of Section~\ref{corec}?}
  In particular, we need to show the correspondence between corecursive evidence and resolution seen as infinite reduction.
  In Section~\ref{ob-equiv}, we prove that for every non-terminating resolution resulting from a \textit{simple loop}, a coinductively provable
 candidate lemma can be obtained and its evidence is \textit{observationally equivalent} to the non-terminating resolution process.

\noindent In type class inference, the proof evidence has computational meaning, i.e. the evidence will be run as a program. So the corecursive evidence should be able to recover the original infinite resolution trace.


In Sections~\ref{relwork} and~\ref{fwork} we survey the related work, explain the limitations of our method and conclude the paper. 
We have implemented our method of candidate lemma generation based on loop analysis and corecursive
resolution, and incorporated it in the type class inference process of a simple
functional language. 
Additional examples 
and implementation information are provided in the extended version. 

\section{Preliminaries: Resolution with Evidence}
\label{preliminary}

This section provides a standard formalisation of resolution with evidence
together with two derived forms: a
\emph{small-step} variant of resolution and a reification of resolution in a
resolution tree.

We consider the following syntax.
\begin{definition}[Basic syntax]
{\small
\[\begin{array}{l@{\hspace{5mm}}l@{\hspace{5mm}}c@{\hspace{5mm}}l}
  \text{Term} & t \ & ::= & \ x \ | \ K\ | \ t\ t' \\
  \text{Atomic Formula} & A, B, C, D\ & ::= & \ P\ t_1 \ ...\ t_n \\
  \text{Horn Formula} & H \ & ::= & \ B_1,..., B_n \Rightarrow A \\
  \text{Proof/Evidence} & e \ & ::= & \ \kappa \ |\ e \ e' \\
  \text{Axiom Environment} & \Phi \ & ::= & \cdot \ | \ \Phi, (\kappa : H) 
\end{array} \]}
\end{definition}

We consider first-order applicative terms, where $K$ stands for some 
constant symbol. Atomic formulas are predicates
on terms, and Horn formulas are defined as usual. We assume that 
all variables $x$ in Horn formulas are implicitly universally quantified.
There are no \textit{existential variables} in the Horn formulas, i.e.,
$\bigcup_i \mathrm{FV}(B_i) \subseteq \mathrm{FV}(A)$ for $B_1,\ldots,B_n
\Rightarrow A$. 
The axiom environment $\Phi$ is a set of Horn formulas labelled with distinct
evidence constants $\kappa$. Evidence terms $e$ are made of evidence
constants $\kappa$ and their applications. Finally, 
we often use $\underline{A}$ to abbreviate $A_1,..., A_n$ when the number $n$ is
unimportant. 



The above syntax can be used to model the Haskell type class setting as follows.
Terms denote Haskell types like \ensuremath{\Conid{Int}} or \ensuremath{(\Varid{x},\Varid{y})}, and atomic formulas
denote Haskell type class constraints on types like \ensuremath{\Conid{Eq}\;(\Conid{Int},\Conid{Int})}. Horn formulas
correspond to the type-level information of type class instances. 



Our evidence $e$ models type class dictionaries, following Wadler and Blott's
dictionary-passing elaboration of type classes \cite{wadler1989make}.  In particular the constants
$\kappa$ refer to dictionaries that capture the term-level information of type
class instances, i.e., the implementations of the type class methods.  Evidence
application $(e\ e')$ accounts for dictionaries that are parametrised by other
dictionaries.  Horn formulas in turn represent type class instance
declarations. The axiom environment $\Phi$ corresponds to Haskell's global
environment of type class instances. Note that the treatment of type class instance declaration
and their corresponding evidence construction here
are based on our own understanding of many related works (\cite{Jones97,jones2003qualified}), which are also discussed in Section \ref{relwork}.  



In order to define resolution  together with evidence generation, we use resolution judgement $\Phi \vdash A \Downarrow e$ to state that the
atomic formula $A$ is entailed by the axioms $\Phi$, and that the proof
term $e$ witnesses this entailment. It is defined by means of the following
 inference rule.
\begin{definition}[Resolution]\label{def:resolution}
\fbox{$\Phi \vdash A \Downarrow e$}
\label{context-red}
{\small
\[\begin{array}{c}
  \infer[\text{if}~(\kappa : B_1,..., B_n \Rightarrow A) \in \Phi]
    {\Phi \vdash \sigma A \Downarrow \kappa~e_1 \cdots e_n}
    { \Phi \vdash \sigma B_1 \Downarrow e_1 \quad \cdots \quad \Phi \vdash \sigma B_n \Downarrow e_n 
      } 
  \end{array}
\]}
\end{definition}
 Using this definition we can show $\Phi_{\ensuremath{\Conid{Pair}}} \vdash \ensuremath{\Conid{Eq}\;(\Conid{Int},\Conid{Int})} \Downarrow \ensuremath{\kappa_\Varid{Pair}\;\kappa_\Varid{Int}\;\kappa_\Varid{Int}}$.

In case  resolution is diverging, it is often more convenient to consider
a \emph{small-step} resolution judgement (in analogy to the small step
operational semantics) that performs one resolution step at a time and allows
us to observe the intermediate states.

The basic idea is to rewrite the initial query $A$ step by step into its
evidence~$e$. This involves \emph{mixed terms} on the way that consist partly
of evidence, and partly of formulas that are not yet resolved.
\begin{definition}[Mixed Terms]
{\small
\[\begin{array}{l@{\hspace{5mm}}l@{\hspace{5mm}}c@{\hspace{5mm}}l}
  \text{Mixed term} & q \ & ::= & \ A \ |\ \kappa \ | \ q \ q' \\
  \text{Mixed term context} & \mathcal{C} \ & ::= & \ \bullet \ | \ \mathcal{C}\ q \ | \ q\ \mathcal{C}\\
\end{array}
\]}
\end{definition}
At the same time we have defined mixed term contexts $\mathcal{C}$ as mixed
terms with a hole $\bullet$, where $\mathcal{C}[q]$ substitutes the hole with $q$ in the usual way.


\begin{definition}[Small-Step Resolution]
\fbox{$\Phi \vdash q \rightarrow q'$}
\label{context-red}
{\small
\[\begin{array}{c}
  \infer[\text{if}~(\kappa : \underline{B} \Rightarrow A) \in \Phi]
    {\Phi \vdash \mathcal{C}[\sigma A] \rightarrow \mathcal{C}[\kappa~\sigma\underline{B}]}
    {  } 
  \end{array}
\]}
\end{definition}
For instance, we resolve \ensuremath{\Conid{Eq}\;(\Conid{Int},\Conid{Int})} in three small
steps: $\Phi_{\ensuremath{\Conid{Pair}}} \vdash \ensuremath{\Conid{Eq}\;(\Conid{Int},\Conid{Int})} \rightarrow \ensuremath{\kappa_\Varid{Pair}\;(\Conid{Eq}\;\Conid{Int})\;(\Conid{Eq}\;\Conid{Int})}$,
       $\Phi_{\ensuremath{\Conid{Pair}}} \vdash \ensuremath{\kappa_\Varid{Pair}\;(\Conid{Eq}\;\Conid{Int})\;(\Conid{Eq}\;\Conid{Int})} \rightarrow \ensuremath{\kappa_\Varid{Pair}\;\kappa_\Varid{Int}\;(\Conid{Eq}\;\Conid{Int})}$ and
       $\Phi_{\ensuremath{\Conid{Pair}}} \vdash \ensuremath{\kappa_\Varid{Pair}\;\kappa_\Varid{Int}\;(\Conid{Eq}\;\Conid{Int})} \rightarrow \ensuremath{\kappa_\Varid{Pair}\;\kappa_\Varid{Int}\;\kappa_\Varid{Int}}$.
We write  $\Phi \vdash q \to^{*} q'$ to denote the transitive closure of
small-step resolution.


The following theorem formalizes the intuition that resolution and small-step resolution coincide.
\begin{theorem}
\label{func:eq}
 $\Phi \vdash A \Downarrow e$ iff $ \Phi \vdash A \to^* e$.
\end{theorem}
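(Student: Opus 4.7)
The plan is to prove both directions by extending the big-step judgement $\Downarrow$ to mixed terms, which lets me state the necessary inductive hypotheses without gymnastics. Concretely, I will first declare $\Phi \vdash q \Downarrow e$ on mixed terms by adding the clauses $\Phi \vdash \kappa \Downarrow \kappa$ and $\Phi \vdash q_1\,q_2 \Downarrow e_1\,e_2$ whenever $\Phi \vdash q_i \Downarrow e_i$, keeping the original rule for atomic formulas. Note that when $q$ is already a pure evidence term $e$, this trivially gives $\Phi \vdash e \Downarrow e$, and when $q = A$ is atomic, the extended judgement agrees with Definition~\ref{def:resolution}.

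For the forward direction ($\Rightarrow$), I will prove the stronger statement: if $\Phi \vdash q \Downarrow e$, then $\Phi \vdash q \to^* e$, by induction on the derivation of $\Phi \vdash q \Downarrow e$. The $\kappa$ case is immediate; for $q_1\,q_2$, the induction hypothesis gives $q_i \to^* e_i$, and closure of $\to$ under the contexts $\bullet\,q_2$ and $e_1\,\bullet$ assembles these into $q_1\,q_2 \to^* e_1\,e_2$. For the atomic case $q = \sigma A$ with clause $\kappa : B_1,\ldots,B_n \Rightarrow A$, I first apply one small-step with context $\bullet$ to get $\sigma A \to \kappa\,\sigma B_1\,\cdots\,\sigma B_n$, then reduce each $\sigma B_i$ to $e_i$ in turn using the induction hypothesis together with the appropriate contexts $\kappa\,e_1\,\cdots\,e_{i-1}\,\bullet\,\sigma B_{i+1}\,\cdots\,\sigma B_n$.

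For the backward direction ($\Leftarrow$), the key is a preservation lemma: if $\Phi \vdash q \to q'$ and $\Phi \vdash q' \Downarrow e$, then $\Phi \vdash q \Downarrow e$. I will prove this by induction on the mixed term context $\mathcal{C}$ in the small-step rule. In the base case $\mathcal{C} = \bullet$, we have $q = \sigma A$ and $q' = \kappa\,\sigma B_1\,\cdots\,\sigma B_n$; the assumed derivation of $\Phi \vdash q' \Downarrow e$ must, by inversion on the extended rules, force $e = \kappa\,e_1\,\cdots\,e_n$ with $\Phi \vdash \sigma B_i \Downarrow e_i$, and then Definition~\ref{def:resolution} yields $\Phi \vdash \sigma A \Downarrow e$. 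The inductive cases, $\mathcal{C} = \mathcal{C}'\,q_2$ or $\mathcal{C} = q_1\,\mathcal{C}'$, follow by inversion on the application rule and the induction hypothesis. Then an outer induction on the number of steps in $\Phi \vdash A \to^* e$, using $\Phi \vdash e \Downarrow e$ as the base of the reduction chain, produces $\Phi \vdash A \Downarrow e$.

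The main obstacle is making sure the extended judgement on mixed terms is genuinely conservative over the original one on atomic formulas, so that the statement being proved actually specialises to the theorem. This boils down to checking that the added rules for $\kappa$ and $q_1\,q_2$ never enable an atomic $A$ to be related to an $e$ that the original rule would not, which is immediate because the added rules do not mention atomic formulas in their conclusions. With that sanity check in place, instantiating the two directions at $q = A$ gives exactly the biconditional claimed.
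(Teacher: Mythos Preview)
The paper does not actually supply a proof of this theorem; it is stated as a routine equivalence between the big-step and small-step formulations and left at that. So there is no ``paper's proof'' to compare against in detail.

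Your argument is correct and is the natural way to carry this out. A few remarks confirming the points where one might worry:
\begin{itemize}
\item The conservativity check you flag is the right one, and your justification is sound: in the mixed-term grammar $q ::= A \mid \kappa \mid q\,q'$, atomic formulas $A$ are a separate syntactic form from applications $q\,q'$, so the added application rule cannot fire on an atomic head. Hence the extended $\Downarrow$ agrees with Definition~\ref{def:resolution} on atomic inputs.
\item The forward direction relies on small-step reduction being closed under mixed-term contexts, i.e.\ if $q \to q'$ then $\mathcal{D}[q] \to \mathcal{D}[q']$. This holds because the small-step rule is already stated relative to an arbitrary context $\mathcal{C}$, and contexts compose.
\item In the preservation lemma's base case, the inversion on $\Phi \vdash \kappa\,\sigma B_1\cdots\sigma B_n \Downarrow e$ does indeed force $e = \kappa\,e_1\cdots e_n$ with $\Phi \vdash \sigma B_i \Downarrow e_i$: when $n=0$ the $\kappa$-rule applies; when $n\geq 1$ the head is an application and only the application rule can match, which you then peel off $n$ times. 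The $\sigma B_i$ are atomic, so their sub-derivations fall back to the original rule.
\item The backward direction's outer induction is fine because the endpoint $e$ is pure evidence, and $\Phi \vdash e \Downarrow e$ holds by a trivial structural induction on $e$ using your two added rules.
\end{itemize}
In short, the proposal is a complete and standard proof; the paper simply omits it.
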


The proof tree for a judgement $\Phi \vdash A \Downarrow e$ is called a
\emph{resolution tree}.  It conveniently records the history of resolution and,
for instance, it is easy to observe the ancestors of a node. This last feature
is useful for our heuristic loop invariant analysis in Section \ref{inv}.    
 
Our formalisation of trees in general is as follows: We use $w, v$ to denote
positions $\langle k_1, k_2, ..., k_n \rangle$ in a tree, where $k_i \geq 1 $
for $ 1 \leq i \leq n $.  Let $\epsilon$ denote the empty position or
\emph{root}. We also define $\langle k_1, k_2, ..., k_n \rangle \cdot i =
\langle k_1, k_2, ..., k_n, i\rangle $ and $\langle k_1, k_2, ..., k_n \rangle
+ \langle l_1,..., l_m \rangle = \langle k_1, k_2, ..., k_n, l_1, ..., l_m
\rangle$. We write $w > v$ if there exists a non-empty $v'$ such that $w = v + v'$. For a tree $T$, $T(w)$ refers to the node at position $w$, and
$T(w, i)$ refers to the edge between $T(w)$ and $T(w\cdot i)$. We use $\Box$ as a special proposition to denote
success.

 Resolution trees are defined as follows, note that they are a special case of \textit{rewriting
 trees}~\cite{KomendantskayaEtAl15,KJ15}: 
\begin{definition}[Resolution Tree]
The resolution tree for atomic formula $A$ is
a tree $T$ satisfying:
\begin{itemize}
\item $T(\epsilon) =  A$.
\item $T(w \cdot i) = \sigma B_i$ and $T(w, i) = \kappa^i$ with $i \in \{1
,..., n\}$ if $T(w) = \sigma D$ and $(\kappa : B_1, ..., B_n \Rightarrow D) \in
\Phi$. When $n = 0$, we write $T(w \cdot i) = \Box$ and $T(w, i) = \kappa$ for
any $i > 0$.  
\end{itemize}
\end{definition}


In general, the resolution tree can be infinite, this means that resolution is
non-terminating, which we denote as $\Phi \vdash A \Uparrow$.
Figure~\ref{fig:ex2} shows a finite fragment of the infinite resolution tree for $\Phi_{\ensuremath{\Conid{HPTree}}} \vdash \ensuremath{\Conid{Eq}\;(\Conid{Mu}\;\Conid{HPTree}\;\Conid{Int})} \Uparrow$.



\begin{figure}[tbp]
  \makebox[\textwidth]{

\begin{tikzpicture}[every tree node/.style={},
   level distance=1.20cm,sibling distance=1.00cm, 
   edge from parent path={(\tikzparentnode) -- (\tikzchildnode)}, font=\scriptsize]
\Tree [.\node{\underline{\ensuremath{\Conid{Eq}\;(\Conid{Mu}\;\Conid{HPTree}\;\Conid{Int})}}};
        \edge node[auto=right]{$\ensuremath{\kappa_{\Varid{Mu}}}^1$};      
         [.\node{\ensuremath{\Conid{Eq}\;(\Conid{HPTree}\;(\Conid{Mu}\;\Conid{HPTree})\;\Conid{Int})}};  
        \edge node[auto=right]{$\ensuremath{\kappa_{\Varid{HPTree}}}^1$};      
         [.\node{\ensuremath{\Conid{Eq}\;\Conid{Int}}};  
           \edge node[auto=right]{$\ensuremath{\kappa_\Varid{Int}}$};      
         [.\node{$\Box$};  
         ]
         ]
         \edge node[auto=left]{$\ensuremath{\kappa_{\Varid{HPTree}}}^2$};      
         [.\node{\underline{\ensuremath{\Conid{Eq}\;(\Conid{Mu}\;\Conid{HPTree}\;(\Conid{Int},\Conid{Int}))}}};  
           \edge node[auto=right]{$\ensuremath{\kappa_{\Varid{Mu}}}^1$};
                 [.\node{\ensuremath{\Conid{Eq}\;(\Conid{HPTree}\;(\Conid{Mu}\;\Conid{HPTree})\;(\Conid{Int},\Conid{Int}))}};
                   \edge node[auto=right]{$\ensuremath{\kappa_{\Varid{HPTree}}}^1$};
                   [.\node{...};]
                   \edge node[auto=left]{$\ensuremath{\kappa_{\Varid{HPTree}}}^2$};
                   [.\node{...};]
                 ]
         ]
         ]
         ]
\end{tikzpicture}
}
	\caption{The infinite resolution tree for $\Phi_{\ensuremath{\Conid{HPTree}}} \vdash \ensuremath{\Conid{Eq}\;(\Conid{Mu}\;\Conid{HPTree}\;\Conid{Int})} \Uparrow$}
	\label{fig:ex2}
\end{figure}

We note that Definitions~~\ref{def:resolution} and~\ref{context-red} describe a special case of SLD-resolution in which unification taking place in derivations is restricted to term-matching.
This restriction is motivated by two considerations. The first one comes directly from the application area of our results: type class resolution uses exactly this restricted version of SLD-resolution.
The second reason is of more general nature. As discussed in detail in~\cite{FuK15,KJ15}, SLD-derivations restricted to term-matching reflect the \emph{theorem proving} content of a proof by SLD-resolution. That is,
if $A$ can be derived  from $\Phi$ by SLD-resolution with term-matching only, then $A$ is inductvely entailed by $\Phi$. If, on the other hand, $A$ is derived from $\Phi$ by SLD-resolution with unification and computes a substitution $\sigma$,
then $\sigma A$ is inductively entailed by $\Phi$. In this sense, SLD-resolution with unification additionally has a \emph{problem-solving} aspect.
In developing proof-theoretic approach to resolution here, we thus focus on resolution by term-matching.


The resolution rule of Definition~\ref{def:resolution} resembles the definition of the consequence operator~\cite{Llo87} used to
define declarative semantics of Horn clause Logic.
In fact,  the forward and backward closure of the  rule of Definition~\ref{def:resolution} can be directly used to construct the usual least and greatest Herbrand models
for Horn clause logic, as shown in~\cite{KJ15}. There, it was also shown that SLD-resolution by term-matching is sound but incomplete relative to the least Herbrand models.


\section{Candidate Lemma Generation}
\label{inv}

This section explains how we generate candidate lemma from a
 potentially infinite resolution tree. Based on Paterson's condition we obtain
a finite pruned approximation (Definition~\ref{closed}) of this resolution tree.
Anti-unification on this approximation yields an abstract atomic formula and
 the corresponding abstract approximated resolution tree. It is from this abstract
tree that we read off the candidate lemma (Definition~\ref{invariant}).

We use $\Sigma(A)$ and $\mathrm{FVar}(A)$ to denote the
multi-sets of respectively function symbols and variables in $A$.

\begin{definition}[Paterson's Condition]
  For $(\kappa : \underline{B} \Rightarrow A) \in \Phi$, we say $\kappa$
satisfies Paterson's condition if $(\Sigma(B_i) \cup \mathrm{FVar}(B_i))
\subset (\Sigma(A) \cup \mathrm{FVar}(A))$ for each $B_i$. 
\end{definition}

Paterson's condition is used in Glasgow Haskell Compiler to enforce termination of context
reduction \cite{SulzmannDJS07}. In this paper, we use it as a practical
criterion to detect problematic instance declarations. Any declarations that do
not satisfy the condition could potentially introduce diverging behavior in
the resolution tree. 


If $\kappa : A_1,..., A_n \Rightarrow B$, then we have $\kappa^i : A_i \Rightarrow B$ for projection on index $i$.

\begin{definition}[Critical Triple]
\label{loop}
Let $v = (w \cdot i) + v'$ for some $v'$. A critical triple in $T$ is a triple $\langle \kappa^i, T(w), T(v) \rangle$ such that $T(v, i) = T(w, i) = \kappa^i$, and $\kappa^i$ does not satisfy Paterson's condition. 
\end{definition}

We will omit $\kappa^i$ from the triple and write $\langle T(w), T(v)
\rangle$ when it is not important. Intuitively, it means the nodes $T(w)$ and $T(v)$ are
using the same problematic projection $\kappa^i$, which could give rise to infinite resolution. 

The absence of a critical triple in a resolution tree means
that it has to be finite \cite{SulzmannDJS07}, while the presence of
a critical triple only means that the tree is \textit{possibly} infinite. In
general the infiniteness of a resolution tree is undecidable and
the critical triples provide a convenient over-approximation.

\begin{definition}[Closed Subtree]
  \label{closed}
  A closed subtree $T$ is a subtree of a resolution tree such that for all leaves $T(v) \not = \Box$,
the root $T(\epsilon)$ and $T(v)$ form a critical triple. 
\end{definition}

The critical triple in Figure \ref{fig:ex2} is formed by the underlined nodes. The
closed subtree in that figure is the subtree without the infinite branch below node
\ensuremath{\Conid{Eq}\;(\Conid{Mu}\;\Conid{HPTree}\;(\Conid{Int},\Conid{Int}))}. A closed subtree can intuitively be
understood as a finite approximation of an infinite resolution tree. We use it
as the basis for generating candidate lemma by means of 
anti-unification \cite{plotkin1970note}. 


\begin{definition}[Anti-Unifier]
We define the least general anti-unifier of atomic formulas $A$ and $B$ (denoted by $A \sqcup B$) and 
the least general anti-unifier of the terms $t$ and $t'$ (denoted by $t \sqcup t'$) as: 

  \begin{itemize}
  \item $P\ t_1\ ..., t_n\ \sqcup P\ t_1'\ ..., t_n' = P\ (t_1 \sqcup t_1')\ ...\ (t_n \sqcup t_n')$
  \item $K\ t_1\ ...\ t_n \sqcup K \ t_1'\ ...\ t_n' = K\ (t_1\sqcup t_1')\ ...\ (t_n\sqcup t_n')$
  \item Otherwise, $A \sqcup B = \phi(A,B), t \sqcup t' = \phi(t,t')$, where $\phi$ is an injective function from a pair of terms (atomic formulas) to a set of fresh variables. 
  \end{itemize}
\end{definition}

Anti-unification allows us to extract the common pattern from different ground atomic formulas.

\begin{definition}[Abstract Representation]
  \label{abs:rep}
  Let $\langle T(\epsilon), T(v_1)\rangle,..., \langle T(\epsilon), T(v_n) \rangle$ be all the critical triples in a closed subtree $T$. 
  Let  $C = T(\epsilon) \sqcup T(v_1) \sqcup ... \sqcup T(v_n)$, then the abstract
  representation $T'$ of the closed subtree $T$ is a tree such that:
  \begin{itemize}
  \item $T'(\epsilon) = C$
  \item $T'(w \cdot i) = \sigma B_i$ and $T'(w, i) = \kappa^i$ with $i \in \{1
,..., n\}$ if $T'(w) = \sigma D$ and $(\kappa : B_1, ..., B_n \Rightarrow D) \in
\Phi$. When $n = 0$, we write $T'(w \cdot i) = \Box$ and $T'(w, i) = \kappa$ for
any $i > 0$.  
  \item $T'(w)$ is undefined if $w > v_i$ for some $1 \leq i \leq n$.
  \end{itemize}

\end{definition}

The abstract representation unfolds the anti-unifier of all the critical
triples. Thus the  abstract representation can always be embedded into the
original closed subtree. It is an abstract form of the closed subtree, and we
can extract the candidate lemma from the abstract representation.  
  
\begin{definition}[Candidate Lemma]
  \label{invariant}
  Let $T$ be an abstract representation of a closed subtree, then the candidate lemma
  induced by this abstract representation is $T(v_1),..., T(v_n) \Rightarrow T(\epsilon)$, 
  where the $T(v_i)$ are all the leaves for which $T(v_i) \Rightarrow T(\epsilon)$ satisfies Paterson's
  condition.
\end{definition}

Figure \ref{fig:ex3} shows the abstract representation of the closed subtree of Figure \ref{fig:ex2}.
We read off the candidate lemma as \ensuremath{\Conid{Eq}\;\Varid{x}\Rightarrow \Conid{Eq}\;(\Conid{Mu}\;\Conid{HPTree}\;\Varid{x})}. 

\begin{figure}[tbp]
  \makebox[\textwidth]{

\begin{tikzpicture}[every tree node/.style={},
   level distance=1.20cm,sibling distance=1.00cm, 
   edge from parent path={(\tikzparentnode) -- (\tikzchildnode)}, font=\scriptsize]
\Tree [.\node{$\underline{\ensuremath{\Conid{Eq}\;(\Conid{Mu}\;\Conid{HPTree}\;\Varid{x})}}$};
        \edge node[auto=right]{$\ensuremath{\kappa_{\Varid{Mu}}}^1$};      
         [.\node{\ensuremath{\Conid{Eq}\;(\Conid{HPTree}\;(\Conid{Mu}\;\Conid{HPTree})\;\Varid{x})}};  
        \edge node[auto=right]{$\ensuremath{\kappa_{\Varid{HPTree}}}^1$};      
         [.\node{\ensuremath{\Conid{Eq}\;\Varid{x}}};  
         ]
         \edge node[auto=left]{$\ensuremath{\kappa_{\Varid{HPTree}}}^2$};      
         [.\node{\underline{\ensuremath{\Conid{Eq}\;(\Conid{Mu}\;\Conid{HPTree}\;(\Varid{x},\Varid{x}))}}};  
         ]
         ]
         ]
         ]
\end{tikzpicture}
}
	\caption{The abstract representation of the closed subtree of Figure~\ref{fig:ex2}}
	\label{fig:ex3}
\end{figure}
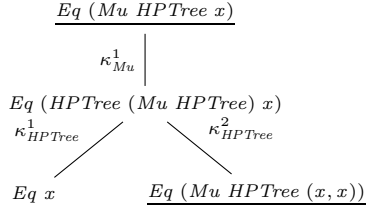

The candidate lemma plays a double role. Firstly, it allows us to
construct a finite resolution tree. For example, we know that \ensuremath{\Conid{Eq}\;(\Conid{Mu}\;\Conid{HPTree}\;\Conid{Int})} gives rise to infinite tree with the axiom environment $\Phi_{\ensuremath{\Conid{HPTree}}}$. However, a finite
tree can be constructed with \ensuremath{\Conid{Eq}\;\Varid{x}\Rightarrow \Conid{Eq}\;(\Conid{Mu}\;\Conid{HPTree}\;\Varid{x})},
since it reduces \ensuremath{\Conid{Eq}\;(\Conid{Mu}\;\Conid{HPTree}\;\Conid{Int})} to \ensuremath{\Conid{Eq}\;\Conid{Int}}, which succeeds trivially
with \ensuremath{\kappa_\Varid{Int}}. Next we show how to prove the candidate lemma coinductively, and
such proofs will encapsulate the
infinite aspect of the resolution tree. Since an infinite resolution tree gives
rise to infinite evidence, the finite proof of the lemma has to be
coinductive. We discuss such evidence construction in detail in Section~\ref{corec} and Section~\ref{ob-equiv}.

%

\section{Corecursive Resolution}
\label{corec}



In this section, we extend the definition of resolution from  Section~\ref{preliminary}
by introducing two additional rules: one to handle coinductive proofs, and another -- to allow Horn formula goals, rather than atomic goals, in the derivations.
We call the resulting calculus \textit{corecursive resolution}.

\begin{definition}[Extended Syntax]
\label{syntax}
{\small
\[
\begin{array}{l@{\hspace{5mm}}l@{\hspace{5mm}}c@{\hspace{5mm}}l}
    \text{Proof/Evidence} &  e    & ::= & \kappa ~\mid~ e \ e' ~\mid~ \textcolor{blue}{\alpha} ~\mid~ \textcolor{blue}{\lambda \alpha . e} ~\mid~ \textcolor{blue}{\mu \alpha. e} \\
    \text{Axiom Environment}  &  \Phi & ::= & \cdot ~\mid~ \Phi, (\textcolor{blue}{e} : H)
\end{array}
\]}
\end{definition}

To support coinductive proofs, we extend the syntax of evidence with functions
$\lambda \alpha . e$, variables $\alpha$ and fixed point $\mu \alpha . e$ (which models the corecursive equation $\alpha = e$ in which $\alpha$ occurs in $e$). Also
we allow the Horn clauses $H$ in the axiom environment $\Phi$ to be supported by any
form of evidence $e$ (and not necessarily by constants $\kappa$).


\begin{definition}[Corecursive Resolution]\label{def:cresolution}
  The following judgement for corecursive resolution extends the resolution in Definition~\ref{def:resolution}. {\small
\[\begin{array}{c}
  \infer[\text{if}~(e : B_1,..., B_m \Rightarrow A) \in \Phi]
    {\Phi \vdash \sigma A \Downarrow e~e_1 \cdots e_n}
    { \Phi \vdash \sigma B_1 \Downarrow e_1 \quad \cdots \quad \Phi \vdash \sigma B_n \Downarrow e_n 
      } 
\\ \\
    \infer[(\textsc{Mu})]
    {\Phi \vdash \underline{A} \Rightarrow B \Downarrow \mu \alpha . e }
    {\Phi, (\alpha : \underline{A} \Rightarrow B) \vdash \underline{A} \Rightarrow B \Downarrow e & \mathrm{HNF}(e)} 
\quad\quad
    \infer[(\textsc{Lam})]
    {\Phi \vdash \underline{A} \Rightarrow B \Downarrow \lambda \underline{\alpha} . e}
    {\Phi, (\underline{\alpha} : \underline{A}) \vdash  B \Downarrow e}
  \end{array}
\]}

\end{definition}
Note that $\mathrm{HNF}(e)$ means $e$ has to be in \textit{head normal form}
$\lambda \underline{\alpha}.\kappa\ \underline{e}$. This requirement is essential to ensure
the corecursive evidence satisfies the \textit{guardedness} condition.\footnote{See the extended version for a detailed discussion.}  The \textsc{Lam} rule implicitly assumes the treatment of \textit{eigenvariables},
i.e. we instantiate all the free variables in $\underline{A} \Rightarrow B$
with fresh term constants. 

We implicitly assume that axiom environments are well-formed. 
\begin{definition}[Well-formedness of environment]
{\small
  \[
\begin{array}{c}
\infer{\cdot \vdash \mathsf{wf} }{}    
\quad\quad
\infer{\Phi, \alpha : H \vdash \mathsf{wf} }{\Phi \vdash \mathsf{wf}}    
\quad\quad
\infer{\Phi, \kappa : H \vdash \mathsf{wf} }{\Phi \vdash \mathsf{wf}}    
\quad\quad
\infer{\Phi, e : H \vdash \mathsf{wf} }{\Phi \vdash H \Downarrow e}    
\end{array}  
\]}
\end{definition}

%
%
%
%
As an example, let us consider resolving the candidate lemma \ensuremath{\Conid{Eq}\;\Varid{x}\Rightarrow \Conid{Eq}\;(\Conid{My}\;\Conid{HPTree}\;\Varid{x})} against
the axiom environment $\Phi_{\ensuremath{\Conid{HPTree}}}$. This yields the following derivation, where
$\Phi_1 = \Phi_{\ensuremath{\Conid{HPTree}}}, (\alpha : \ensuremath{\Conid{Eq}\;\Varid{x}\Rightarrow \Conid{Eq}\;(\Conid{Mu}\;\Conid{HPTree}\;\Varid{x})})$ and $\Phi_2 = \Phi_1, (\alpha_1 : \ensuremath{\Conid{Eq}\;\Conid{C}})$:
%

{\small
\begin{center}
  \infer{\Phi_{\ensuremath{\Conid{HPTree}}}\vdash \ensuremath{\Conid{Eq}\;\Varid{x}\Rightarrow \Conid{Eq}\;(\Conid{Mu}\;\Conid{HPTree}\;\Varid{x})} \Downarrow \mu \alpha . \lambda \alpha_1 . \kappa_{\ensuremath{\Conid{Mu}}}\ (\kappa_{\ensuremath{\Conid{HPTree}}}\ \alpha_1\ (\alpha\
  (\kappa_{\ensuremath{\Conid{Pair}}}\ \alpha_1\ \alpha_1)))}
   {\infer{\Phi_1 \vdash \ensuremath{\Conid{Eq}\;\Varid{x}\Rightarrow \Conid{Eq}\;(\Conid{Mu}\;\Conid{HPTree}\;\Varid{x})} \Downarrow \lambda \alpha_1 . \kappa_{\ensuremath{\Conid{Mu}}}
\ (\kappa_{\ensuremath{\Conid{HPTree}}}\ \alpha_1\ (\alpha\
  (\kappa_{\ensuremath{\Conid{Pair}}}\ \alpha_1\ \alpha_1)))}
   {\infer{(\Phi_2 = \Phi_1, \alpha_1 : \ensuremath{\Conid{Eq}\;\Conid{C}}) \vdash \ensuremath{\Conid{Eq}\;(\Conid{Mu}\;\Conid{HPTree}\;\Conid{C})} \Downarrow \kappa_{\ensuremath{\Conid{Mu}}}\ (\kappa_{\ensuremath{\Conid{HPTree}}}\ \alpha_1\ (\alpha\
  (\kappa_{\ensuremath{\Conid{Pair}}}\ \alpha_1\ \alpha_1)))}
{\infer{\Phi_2 \vdash \ensuremath{\Conid{Eq}\;(\Conid{HPTree}\;(\Conid{Mu}\;\Conid{HPTree}\;\Conid{C}))} \Downarrow \kappa_{\ensuremath{\Conid{HPTree}}}\ \alpha_1\ (\alpha\
  (\kappa_{\ensuremath{\Conid{Pair}}}\ \alpha_1\ \alpha_1))}{
   \infer{\Phi_2 \vdash \ensuremath{\Conid{Eq}\;\Conid{C}} \Downarrow \alpha_1}{} & \infer{\Phi_2 \vdash \ensuremath{\Conid{Eq}\;(\Conid{HPTree}\;(\Conid{C},\Conid{C}))} \Downarrow \alpha\
  (\kappa_{\ensuremath{\Conid{Pair}}}\ \alpha_1\ \alpha_1)}{
   \infer{\Phi_2 \vdash \ensuremath{\Conid{Eq}\;(\Conid{C},\Conid{C})} \Downarrow (\kappa_{\ensuremath{\Conid{Pair}}}\ \alpha_1\ \alpha_1)}
{\infer{\Phi_2 \vdash \ensuremath{\Conid{Eq}\;\Conid{C}} \Downarrow \alpha_1}{} & \infer{\Phi_2 \vdash \ensuremath{\Conid{Eq}\;\Conid{C}} \Downarrow \alpha_1}{}}}}} }}
\end{center}
}

\noindent Once we prove \ensuremath{\Conid{Eq}\;\Varid{x}\Rightarrow \Conid{Eq}\;(\Conid{Mu}\;\Conid{HPTree}\;\Varid{x})} from $\Phi_{\ensuremath{\Conid{HPTree}}}$ by corecursive resolution, we can add it to the
axiom environment and use it to prove the ground query \ensuremath{\Conid{Eq}\;(\Conid{Mu}\;\Conid{HPTree}\;\Conid{Int})}. Let $\Phi' = \Phi_{\ensuremath{\Conid{HPTree}}}, (\mu \alpha . \lambda \alpha_1 . \kappa_1\ (\kappa_2\ \alpha_1\  (\alpha\ (\kappa_3\ \alpha_1\ \alpha_1))) : \ensuremath{\Conid{Eq}\;\Varid{x}\Rightarrow \Conid{Eq}\;(\Conid{Mu}\;\Conid{HPTree}\;\Varid{x})})$. We have the following derivation. 

{\small
\[
\begin{array}{c}
  \infer{\Phi' \vdash \ensuremath{\Conid{Eq}\;(\Conid{Mu}\;\Conid{HPTree}\;\Conid{Int})} \Downarrow (\mu \alpha . \lambda \alpha_1 . \kappa_{\ensuremath{\Conid{Mu}}}\ (\kappa_{HPTree}\ \alpha_1\  (\alpha\ (\kappa_{\ensuremath{\Conid{Pair}}}\ \alpha_1\ \alpha_1)))) \
    \kappa_{\ensuremath{\Conid{Int}}}} {\Phi' \vdash \ensuremath{\Conid{Eq}\;\Conid{Int}} \Downarrow \kappa_{\ensuremath{\Conid{Int}}}}
\end{array}
\]
}

\section{Operational Semantics of Corecursive Evidence}
\label{ob-equiv}

The purpose of this section is to give operational
semantics to corecursive resolution, and in particular, we are interested in giving 
operational interpretation to the
corecursive evidence constructed as a result of applying corecursive resolution.
In type class applications, for example, the evidence constructed for a query will be run as a program.
It is therefore important to establish the exact relationship
between the non-terminating resolution as a process and the proof-term that we obtain via corecursive resolution. 
We prove that corecursive evidence indeed faithfully
captures the otherwise infinite resolution process of Section~\ref{preliminary}.


In general, we know that if $\Phi \vdash A \to^* \mathcal{C}[\sigma A]$, then we can observe the following looping infinite reduction trace:
\[ \Phi \vdash A \to^* \mathcal{C}[\sigma A] \to^* \mathcal{C}[\sigma \mathcal{C}[\sigma^2 A]] \to^* \mathcal{C}[\sigma \mathcal{C}[\sigma^2 \mathcal{C}[\sigma^3 A]]] \to ... \]
Each iteration of the loop gives rise to repeatedly applying
substitution $\sigma$ to the reduction context $\mathcal{C}$. 

In principle, this mixed term context $\mathcal{C}$ may contain an atomic
formula $B$ that itself is normalizing, where $\sigma B$ spawns another loop. 
Clearly this is a complicating factor. For instance, a
loop can spawn off additional loops in each iteration. Alternatively, a loop
can have multiple iteration points such as 
$\Phi \vdash A \to^* \mathcal{C}[\sigma_1 A, \sigma_2 A, ..., \sigma_n
A]$.\footnote{Note that we abuse notation here to denote contexts with multiple holes.
Also we abbreviate identical instantiation of $\mathcal{C}[D, \ldots, D]$ those multiple holes  to $\mathcal{C}[D]$.} 
These complicating factors are beyond the scope of this section. We focus
only on \textit{simple loops}. These are loops with a single iteration point
that does not spawn additional loops.

We use $|\mathcal{C}|$ to denote the set of atomic formulas in the context
$\mathcal{C}$. If all atomic formulas $D \in |\mathcal{C}|$ are irreducible
with respect to $\Phi$, then we call $\mathcal{C}$ a \textit{normal context}. 
\begin{definition}[Simple Loop]
Let $\Phi \vdash B \to^* \mathcal{C}[\sigma B]$, where $\mathcal{C}$ is
normal. If for all $D \in |\mathcal{C}|$, we have that $\Phi \vdash \sigma D \to^*
\mathcal{C}'[D]$ with $| \mathcal{C}' | = \emptyset$, then we call $\Phi
\vdash B \to^* \mathcal{C}[\sigma B]$ a simple loop.  
 \end{definition}
In the above definition, the normality of $\mathcal{C}$ ensures that the loop
has a single iteration point. Likewise the condition $\Phi \vdash \sigma D \to^*
\mathcal{C}'[D]$, which implies that $\Phi \vdash \sigma^n D \to^*
\mathcal{C}'^n[D]$, guarantees that each iteration of the loop
spawns no further loops.

 \begin{definition}[Observational Point]
\label{observe}
   Let $\Phi \vdash B \to^* \mathcal{C}'[\sigma B]$ be a simple loop and $\Phi \vdash B \to^* q$.   We call $q$ an observational point if it is of the form $\mathcal{C}[\delta B]$. We use $\mathcal{O}(B)_{\Phi}$ to denote the set of observational points in the simple loop. 
 \end{definition}

For example, we have the following infinite resolution trace generated by the simple loop (with the subterms of observational points underlined). 
 {\scriptsize
 \begin{center}

   $\Phi_{\ensuremath{\Conid{HPTree}}} \vdash \underline{{Eq}\ ({Mu}\ {HPTree} \ x)} \to \kappa_{\ensuremath{\Conid{Mu}}}\ ({Eq}\ ({HPTree}\ ({Mu}\ {HPTree})\ x)) \to \kappa_{\ensuremath{\Conid{Mu}}}\ (\kappa_{\ensuremath{\Conid{HPTree}}}\ ({Eq}\ x) \ \underline{({Eq}\ ({Mu}\ {HPTree}\ (x,x)))}) \to 
 \kappa_{\ensuremath{\Conid{Mu}}}\ (\kappa_{\ensuremath{\Conid{HPTree}}}\ ({Eq}\ x) \ (\kappa_{\ensuremath{\Conid{Mu}}} \ (Eq \ (HPTree \ (Mu \ HPTree)\ (x,x))))) \to
  \kappa_{\ensuremath{\Conid{Mu}}} (\kappa_{\ensuremath{\Conid{HPTree}}} ({Eq}\ x) (\kappa_{\ensuremath{\Conid{Mu}}} (\kappa_{\ensuremath{\Conid{HPTree}}}  (Eq  (x,x)) (Eq (Mu\ HPTree \ ((x,x),(x,x))))))) \to \kappa_{\ensuremath{\Conid{Mu}}} (\kappa_{\ensuremath{\Conid{HPTree}}} ({Eq}\ x) (\kappa_{\ensuremath{\Conid{Mu}}} (\kappa_{\ensuremath{\Conid{HPTree}}}  (\kappa_{\ensuremath{\Conid{Pair}}}  (Eq\ x) (Eq\ x))) \underline{\ensuremath{(\Conid{Eq}\;(\Conid{Mu}\;\Conid{HPTree}\;((\Varid{x},\Varid{x}),(\Varid{x},\Varid{x}))))}})))$ $\to ... $
 \end{center}
 }
\noindent In this case, we have $\sigma = [\ensuremath{(\Varid{x},\Varid{x})}/\ensuremath{\Varid{x}}]$ and $\Phi \vdash \sigma ({Eq}\ x) \to \kappa_{\ensuremath{\Conid{Pair}}}\ ({Eq}\ x)\ ({Eq}\ x)$.


The corecursive evidence encapsulates an infinite derivation in a finite fixpoint
expression. We can recover the infinite resolution by reducing the
corecursive expression.
To define small-step \textit{evidence reduction}, we first extend
 mixed terms to cope with richer
corecursive evidence.
\begin{definition}
Mixed term $q \  ::=  \ A \ |\ \kappa \ | \ q \ q'\ | \ \textcolor{blue}{\alpha}\ |\ \textcolor{blue}{\lambda \alpha.q} \ | \ \textcolor{blue}{\mu \alpha.q}$
\end{definition}

Now we define the small-step evidence reduction relation $q \leadsto q'$.
\begin{definition}[Small Step Evidence Reduction]
  \label{label}
  \fbox{$q \leadsto q'$}
{\small
\[\begin{array}{c}

\infer{{\mathcal{C}[\mu \alpha . q]} \leadsto_\mu \mathcal{C}[[\mu \alpha . q/\alpha]q]}{} 
\quad \quad
\infer{\mathcal{C}[(\lambda \alpha . q) \ q'] \leadsto_\beta \mathcal{C}[[q'/\alpha] q] }{}

  \end{array}
\]}

\end{definition}
Note that for simplicity we still use the mixed term context $\mathcal{C}$ as defined in
Section \ref{preliminary}, but we only allow the reduction of an outermost redex, i.e.,
a redex that is not a subterm of some other redex. In other words, 
reduction unfolds the evidence term strictly downwards from the
root, this follows closely the way evidence is constructed during
resolution.

We call the states where we perform a $\mu$-transition \emph{corecursive points}.
Note that $\mu$-transitions unfold
 a corecursive definition.
These
correspond closely to the observational points in resolution.

\begin{definition}[Corecursive Point]
Let $q' \leadsto^* q$. We call $q$ a corecursive point if it is of the form $\mathcal{C}[(\mu \alpha.e) \ q_1 ...\ q_n]$. We use $\mathcal{S}(q')$ to denote the set of corecursive points in $q' \leadsto^* q$.  
 \end{definition}

Let $e \equiv \mu \alpha . \lambda \alpha_1 . \kappa_{\ensuremath{\Conid{Mu}}}\ (\kappa_{\ensuremath{\Conid{HPTree}}}\ \alpha_1\  (\alpha\ (\kappa_{\ensuremath{\Conid{Pair}}}\ \alpha_1\ \alpha_1)))$. We have the following evidence reduction trace (with the subterms of corecursive points underlined):

{\scriptsize
  \begin{center}

      $\underline{e \ ({Eq}\ x)} \leadsto_\mu (\lambda \alpha_1
      . \kappa_{\ensuremath{\Conid{Mu}}}\ (\kappa_{\ensuremath{\Conid{HPTree}}}\ \alpha_1 \ (e \
      (\kappa_{\ensuremath{\Conid{Pair}}} \ \alpha_1 \ \alpha_1))))\ ({Eq}\ x)
      \leadsto_\beta \kappa_{\ensuremath{\Conid{Mu}}}\ (\kappa_{\ensuremath{\Conid{HPTree}}}\ ({Eq}\ x) \
      \underline{(e \ (\kappa_{\ensuremath{\Conid{Pair}}} \ ({Eq}\ x) \ ({Eq}\ x)))})
      \leadsto_\mu \kappa_{\ensuremath{\Conid{Mu}}}\ (\kappa_{\ensuremath{\Conid{HPTree}}}\ ({Eq}\ x) \
      ((\lambda \alpha_1 . \kappa_{\ensuremath{\Conid{Mu}}}\ (\kappa_{\ensuremath{\Conid{HPTree}}}\ \alpha_1
      \ (e \ (\kappa_{\ensuremath{\Conid{Pair}}} \ \alpha_1 \ \alpha_1)))) \
      (\kappa_{\ensuremath{\Conid{Pair}}} \ ({Eq}\ x) \ ({Eq}\ x)))) \leadsto_\beta
      \kappa_{\ensuremath{\Conid{Mu}}} (\kappa_{\ensuremath{\Conid{HPTree}}} ({Eq}\ x) (\kappa_{\ensuremath{\Conid{Mu}}}
      (\kappa_{\ensuremath{\Conid{HPTree}}} (\kappa_{\ensuremath{\Conid{Pair}}} ({Eq}\ x) ({Eq}\ x))\underline{(e (\kappa_{\ensuremath{\Conid{Pair}}} (\kappa_{\ensuremath{\Conid{Pair}}} ({Eq}\ x) ({Eq}\
        x)) (\kappa_{\ensuremath{\Conid{Pair}}} ({Eq}\ x) ({Eq}\ x)) ))})))$
      $\leadsto_\mu ...$
  \end{center}
}

Observe that the mixed term contexts of the observational points and the
corecursive points in the above traces coincide. This allows us to show
observational equivalence of resolution and evidence reduction without
explicitly introducing actual infinite evidence.

The following theorem shows that if resolution gives rise to a simple loop,
then we can obtain a corecursive evidence $e$ (Theorem \ref{equiv} (1)) such that
the infinite resolution trace is observational equivalent to $e$'s evidence
reduction trace (Theorem \ref{equiv} (2)). 

\begin{theorem}[Observational Equivalence]
\label{equiv}
Let $\Phi \vdash B \to^* \mathcal{C}[\sigma B]$ be a simple loop and $|\mathcal{C}| = \{D_1, ..., D_n\}$.
Then:

\noindent \textit{1.} We have  $\Phi \vdash D_1, ..., D_n \Rightarrow B \Downarrow \mu \alpha . \lambda \alpha_1 .... \lambda \alpha_n . e$ for some $e$. 

\noindent \textit{2.} $\mathcal{C}[\delta B] \in \mathcal{O}(B)_\Phi$ iff $\mathcal{C}[(\mu \alpha . \lambda \underline{\alpha} . e) \ \underline{q}] \in \mathcal{S}((\mu \alpha . \lambda \underline{\alpha} . e) \ \underline{D})$.
\end{theorem}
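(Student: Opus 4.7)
For part~(1), the plan is to apply the two new rules of corecursive resolution to strip off the outer quantifiers of the candidate lemma. An application of (\textsc{Mu}) reduces the goal to deriving $\underline{D}\Rightarrow B$ in the extended environment $\Phi,\alpha:\underline{D}\Rightarrow B$, and a subsequent application of (\textsc{Lam}) reduces it to deriving the atom $B$ in $\Phi' := \Phi,\alpha:\underline{D}\Rightarrow B,\underline{\alpha}:\underline{D}$ (with the free variables of $\underline{D}\Rightarrow B$ replaced by fresh eigenconstants). It remains to exhibit a witness $e$ for $\Phi'\vdash B\Downarrow e$. For this I would follow the small-step trace $\Phi\vdash B\to^*\mathcal{C}[\sigma B]$ and then extend it under $\Phi'$: each atom $D_i$ occurring in $\mathcal{C}$ is resolved immediately by $\alpha_i$, and the residual atom $\sigma B$ is resolved by $\alpha\ e_1\cdots e_n$, where each $e_i$ is a witness for $\sigma D_i$. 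The simple-loop hypothesis supplies $\Phi\vdash\sigma D_i\to^*\mathcal{C}'_i[D_i]$ with $|\mathcal{C}'_i|=\emptyset$; since $\mathcal{C}'_i[D_i]$ already consists of pure evidence apart from the $D_i$-leaves, substituting $\alpha_i$ for those leaves yields $e_i = \mathcal{C}'_i[\alpha_i]$. Assembling everything produces a concrete witness $e = \hat{\mathcal{C}}[\alpha\ \mathcal{C}'_1[\alpha_1]\ \cdots\ \mathcal{C}'_n[\alpha_n]]$, where $\hat{\mathcal{C}}$ is $\mathcal{C}$ with each leaf $D_i$ replaced by $\alpha_i$; the corresponding big-step derivation is obtained via the straightforward extension of Theorem~\ref{func:eq} to corecursive resolution. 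The guardedness side-condition $\mathrm{HNF}(\lambda\underline{\alpha}.e)$ of (\textsc{Mu}) is then immediate: because $B$ itself is reducible under $\Phi$, its first reduction step must use some constant $\kappa\in\Phi$ (the hypothesis $\alpha$ requires the premises $\underline{D}$, and each $\alpha_i:D_i$ cannot match the reducible $B$ since every $D_i$ is normal), so the head symbol of $e$ is that $\kappa$.

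For part~(2), the plan is induction on the number $k$ of completed loop iterations, which equals the number of $\mu$-unfolds performed on the evidence side. The base case $k=0$ is trivial: the only observational point is $B$ itself with $\mathcal{C}=\bullet$ and $\delta=\mathrm{id}$, and before any $\mu$-unfold the only corecursive point is the initial term $(\mu\alpha.\lambda\underline{\alpha}.e)\underline{D}$, which has the same context $\bullet$ with $\underline{q}=\underline{D}$. For the inductive step, suppose at iteration $k$ the observational point is $\mathcal{C}_k[\delta_k B]$ and the matching corecursive point is $\mathcal{C}_k[(\mu\alpha.\lambda\underline{\alpha}.e)\underline{q_k}]$ for some $\underline{q_k}$. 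On the resolution side, one more iteration first expands $\delta_k B$ into $\delta_k\mathcal{C}[\delta_{k+1} B]$ and then reduces each copy of $\delta_{k+1} D_i$ in the new surrounding context via $\Phi\vdash \delta_{k+1} D_i \to^* \delta_{k+1}\mathcal{C}'_i[\delta_{k+1} D_i]$, yielding an observational point whose outer context $\mathcal{C}_{k+1}$ is obtained by embedding the $\mathcal{C}'_i$ under the copies of $D_i$ in $\mathcal{C}_k[\mathcal{C}[\bullet]]$. On the evidence side, one $\mu$-unfold followed by $n$ $\beta$-reductions rewrites the innermost $(\mu\alpha.\lambda\underline{\alpha}.e)\underline{q_k}$ to $\hat{\mathcal{C}}[\alpha\ \mathcal{C}'_1[q_k^1]\ \cdots\ \mathcal{C}'_n[q_k^n]]$ with $\alpha_i$ replaced by $q_k^i$ throughout. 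By the explicit form of $e$ from part~(1), this is precisely $\mathcal{C}_{k+1}[(\mu\alpha.\lambda\underline{\alpha}.e)\underline{q_{k+1}}]$ with $q_{k+1}^i = \mathcal{C}'_i[q_k^i]$, yielding the required matching contexts.

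The crux of the argument is the bookkeeping in part~(2): one must check that, between consecutive observational and consecutive corecursive points, no additional occurrences of either kind sneak in. On the resolution side this holds because $\mathcal{C}$ is normal under $\Phi$, so the intermediate atoms are reducible only via the prescribed sub-traces for $\sigma D_i$; on the evidence side this holds because evidence reduction only unfolds an outermost redex, so an inner $\mu$ cannot fire before the surrounding $\lambda$-redexes have been consumed. The single-iteration, single-unfold correspondence that simple loops were designed to guarantee is exactly what keeps the two traces in lock-step; relaxing either component of the simple-loop condition (normality of $\mathcal{C}$, or the auxiliary trace hypothesis on $\sigma D_i$) would produce nested or branching behaviour for which the present inductive bijection would not go through without substantial modification.
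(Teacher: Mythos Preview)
Your proposal is correct and follows essentially the same route as the paper: apply \textsc{Mu} and \textsc{Lam}, then use the small-step/big-step correspondence to build the witness $e=\mathcal{C}[\alpha_1,\ldots,\alpha_n,\alpha\,\mathcal{C}_1[\alpha_1]\cdots\mathcal{C}_n[\alpha_n]]$ from the loop trace, and for part~(2) argue by induction on the number of loop iterations that the observational and corecursive points have identical outer contexts. The only slip is a harmless off-by-one in your inductive step: unfolding $\sigma^k B$ exposes the atoms $\sigma^k D_i$ (not $\sigma^{k+1}D_i$), which reduce via the iterated simple-loop condition to $\mathcal{C}_i^k[D_i]$ rather than to $\delta_{k+1}\mathcal{C}'_i[\delta_{k+1}D_i]$; with this correction your bookkeeping matches the paper's exactly.
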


The proof can be found in the extended version.

 



\section{Related Work}
\label{relwork}
\emph{Calculus of Coinductive Constructions.} Interactive theorem prover Coq pioneered implementation of the \textit{guarded coinduction principle} (\cite{coquand1994infinite,gimenez1996}).  
The Coq termination checker may prevent some nested uses of coinduction, 
e.g. a proof term such as $(\mu \alpha . \lambda x . \kappa_0\ (\kappa_1\ x\ (\alpha\ (\alpha\ x)))) \ \kappa_2$ is not accepted by Coq, while from the outermost reduction point of view, this proof term is productive. 

\emph{Loop detection in term rewriting.} Distinctions between cycle, loop and non-looping
 has long been established in term rewriting research (\cite{dershowitz1987,zantema1996non}). For us, detecting loop is the first step of invariant analysis, but we also want to extract corecursive evidence such that it captures the infinite reduction trace.

\emph{Restricted datatypes.} Hughes (Section 4 \cite{hughes1999restricted})
observed the cyclic nature of the instance declarations \ensuremath{\mathbf{instance}\;\Conid{Sat}\;(\Conid{EqD}\;\Varid{a})\Rightarrow \Conid{Eq}\;\Varid{a}} and
\ensuremath{\mathbf{instance}\;\Conid{Eq}\;\Varid{a}\Rightarrow \Conid{Sat}\;(\Conid{EqD}\;\Varid{a})}. He proposed to treat the looping context
reduction as failure, in which case the compiler would need to search for an alternative
reduction.

\emph{Scrap Your Boilerplate with Class.} The cycle
detection method \cite{Lammel:2005} was proposed to
generate corecursive evidence for a restricted
class of non-terminating resolution. It is supported by the current Glasgow Haskell Compiler.

\emph{Derivable type classes.} Hinze and Peyton Jones \cite{hinze2001}
wanted to use an instance of the form
\noindent \ensuremath{\mathbf{instance}\;(\Conid{Binary}\;\Varid{a},\Conid{Binary}\;(\Varid{f}\;(\Conid{GRose}\;\Varid{f}\;\Varid{a})))\Rightarrow \Conid{Binary}\;(\Conid{GRose}\;\Varid{f}\;\Varid{a})}, but discovered that it causes resolution to diverge.
They  suggested the following as a replacement:
\noindent \ensuremath{\mathbf{instance}\;(\Conid{Binary}\;\Varid{a},\forall \Varid{b}\hsforall \hsdot{\mathrel{\,\cdot\,}}{\;.\;}\Conid{Binary}\;\Varid{b}\Rightarrow \Conid{Binary}\;\Varid{f}\;\Varid{b})\Rightarrow \Conid{Binary}\;(\Conid{GRose}\;\Varid{f}\;\Varid{a})}. 
Unfortunately, Haskell does not support instances with \textit{polymorphic higher-order instance contexts}.
Nevertheless, allowing such implication constraints  
would greatly increase the expressitivity of corecursive resolution. In the terminology of 
our paper, it amounts to extending Horn formulas to intuitionistic formulas. 
Working with intuitionistic formulas would require a certain amount of searching, 
as the non-overlapping condition for Horn formulas is not enough to ensure uniqueness of 
the evidence. For example, consider the following axioms: 

  \begin{center}
    $\kappa_1 : (A \Rightarrow B\ x) \Rightarrow D\ (S\ x)$

    $\kappa_2 : A, D\ x \Rightarrow B\ (S\ x)$
    
    $\kappa_3 :\ \Rightarrow D\ Z$
  \end{center}

  

\noindent  We have two distinct proof terms for $D\ (S\ (S\ (S\ (S\ Z)))))$:

\begin{center}
  $\kappa_1\ (\lambda \alpha_1 . \kappa_2\ \alpha_1\ (\kappa_1\ (\lambda \alpha_2
  . \kappa_2\ \underline{\remph{\alpha_1}}\ \kappa_3)))$

  $\kappa_1\ (\lambda \alpha_1 . \kappa_2\ \alpha_1\ (\kappa_1\ (\lambda \alpha_2
  . \kappa_2\ \underline{\remph{\alpha_2}}\ \kappa_3)))$
\end{center}
\noindent This is 
 undesirable from the perspective of generating evidence for type class.  

\emph{Instance declarations and (Horn Clause) logic programs}. The process of simplifying type class
constraints is formally described as the notion of \textit{context reduction}
by Peyton Jones et. al.~\cite{Jones97}. In Section 3.2 of the same paper also
describes the form and the requirements of instance declarations. Type class
evidence and their connection to type system are studied in Mark Jones's
thesis~\cite[Chapter 4.2]{jones2003qualified}. Context reduction,
instance declaration and their connection to proof relevant resolution are also
discussed under the name of \textit{LP-TM} (logic programming with
term-matching) in Fu and Komendantskaya~\cite[Section 4.1]{FuK15}.


\section{Conclusion and Future Work}\label{fwork}

We have introduced a novel approach to  non-terminating resolution.
Firstly, we have shown that the popular cycle detection methods employed for logic programming or type class resolution
can be understood via more general coinductive proof principles (\cite{coquand1994infinite,gimenez1996}).
Secondly, we have shown that 
 resolution can be enriched with rules that capture the intuition of richer coinductive hypothesis formation.
 This extension allows to provide corecursive evidence to some derivations that could not be handled by previous methods.
 Moreover, corecursive resolution is formulated in a proof-relevant way, i.e. proof-evidence construction is an essential part
 of  corecursive resolution. This makes it easier to integrate it directly into type class inference.

We have implemented the techniques of Sections \ref{inv} and \ref{corec}, and have incorporated
them as part of the evidence construction process for a simple language that 
admits previously non-terminating examples.\footnote{See the extended version for more examples and information about the implementation. Extended version is available from authors' homepages.} 

\textbf{Future Work} In general, the interactions between different loops
can be complicated. Consider $\Phi_{\ensuremath{\Conid{Pair}}}$ with the following
declarations (denoted by $\Phi_M$):
\begin{hscode}\SaveRestoreHook
\column{B}{@{}>{\hspre}l<{\hspost}@{}}%
\column{3}{@{}>{\hspre}l<{\hspost}@{}}%
\column{E}{@{}>{\hspre}l<{\hspost}@{}}%
\>[3]{}\kappa_\Conid{M}\mathbin{:}\Conid{Eq}\;(\Varid{h}_{1}\;(\Conid{M}\;\Varid{h}_{1}\;\Varid{h}_{2})\;(\Conid{M}\;\Varid{h}_{2}\;\Varid{h}_{1})\;\Varid{a})\Rightarrow \Conid{Eq}\;(\Conid{M}\;\Varid{h}_{1}\;\Varid{h}_{2}\;\Varid{a}){}\<[E]%
\\
\>[3]{}\kappa_\Conid{H}\mathbin{:}(\Conid{Eq}\;\Varid{a},\Conid{Eq}\;((\Varid{f}_{1}\;\Varid{a}),(\Varid{f}_{2}\;\Varid{a})))\Rightarrow \Conid{Eq}\;(\Conid{H}\;\Varid{f}_{1}\;\Varid{f}_{2}\;\Varid{a}){}\<[E]%
\\
\>[3]{}\kappa_\Conid{G}\mathbin{:}\Conid{Eq}\;((\Varid{g}\;\Varid{a}),(\Varid{f}\;(\Varid{g}\;\Varid{a})))\Rightarrow \Conid{Eq}\;(\Conid{G}\;\Varid{f}\;\Varid{g}\;\Varid{a}){}\<[E]%
\ColumnHook
\end{hscode}\resethooks
\begin{figure}[tbp]
  \makebox[\textwidth]{
\begin{tikzpicture}[every tree node/.style={},
   level distance=1.00cm,sibling distance=.40cm, 
   edge from parent path={(\tikzparentnode) -- (\tikzchildnode)}, font=\scriptsize]
\Tree [.\node{$\underline{\ensuremath{\Conid{Eq}\;(\Conid{M}\;\Conid{H}\;\Conid{G}\;\Conid{Int})}}_1$};
        \edge node[auto=right]{$\kappa_{M}$};      
         [.\node{\ensuremath{\Conid{Eq}\;(\Conid{H}\;(\Conid{M}\;\Conid{H}\;\Conid{G})\;(\Conid{M}\;\Conid{G}\;\Conid{H})\;\Conid{Int}}};  
        \edge node[auto=right]{$\kappa_{H}^1$};      
         [.\node{\ensuremath{\Conid{Eq}\;\Conid{Int}}};  
           \edge node[auto=right]{$\kappa_{\ensuremath{\Conid{Int}}}$};      
         [.\node{$\Box$};  
         ]
         ]
         \edge node[auto=right]{$\kappa_H^2$};      
         [.\node{\ensuremath{\Conid{Eq}\;((\Conid{M}\;\Conid{H}\;\Conid{G}\;\Conid{Int}),(\Conid{M}\;\Conid{G}\;\Conid{H}\;\Conid{Int}))}};  
           \edge node[auto=right]{$\kappa_{\ensuremath{\Conid{Pair}}}^1$};      
         [.\node{$\underline{\ensuremath{\Conid{Eq}\;(\Conid{M}\;\Conid{H}\;\Conid{G}\;\Conid{Int})}}_1$};  
         ]
         \edge node[auto=right]{$\kappa_{\ensuremath{\Conid{Pair}}}^2$};      
         [.\node{$\underline{\ensuremath{\Conid{Eq}\;(\Conid{M}\;\Conid{G}\;\Conid{H}\;\Conid{Int})}}_2$};  
           \edge node[auto=right]{$\kappa_M$};      
         [.\node{\ensuremath{\Conid{Eq}\;(\Conid{G}\;(\Conid{M}\;\Conid{G}\;\Conid{H})\;(\Conid{M}\;\Conid{H}\;\Conid{G})\;\Conid{Int})}};  
         \edge node[auto=right]{$\kappa_G$};      
         [.\node{\ensuremath{\Conid{Eq}\;((\Conid{M}\;\Conid{H}\;\Conid{G}\;\Conid{Int}),(\Conid{M}\;\Conid{G}\;\Conid{H}\;((\Conid{M}\;\Conid{H}\;\Conid{G})\;\Conid{Int})))}};  
           \edge node[auto=right]{$\kappa_{\ensuremath{\Conid{Pair}}}$};      
         [.\node{$\underline{\ensuremath{\Conid{Eq}\;(\Conid{M}\;\Conid{H}\;\Conid{G}\;\Conid{Int})}}_1$};  
         ]
         \edge node[auto=right]{$\kappa_{\ensuremath{\Conid{Pair}}}$};      
         [.\node{$\underline{\ensuremath{\Conid{Eq}\;(\Conid{M}\;\Conid{G}\;\Conid{H}\;((\Conid{M}\;\Conid{H}\;\Conid{G})\;\Conid{Int}))}}_2$};  
         ]
         ]
         ]
         ]
         ]
         ]
         ]
\end{tikzpicture}
}
	\caption{A Partial Resolution tree for $\Phi_M \vdash \ensuremath{\Conid{Eq}\;(\Conid{M}\;\Conid{H}\;\Conid{G}\;\Conid{Int})} \Uparrow$}
	\label{fig:ex4}
\end{figure}
\noindent A partial resolution tree generated by the query \ensuremath{\Conid{Eq}\;(\Conid{M}\;\Conid{H}\;\Conid{G}\;\Conid{Int})} is
described in Figure \ref{fig:ex4}. In this case the cycle (underlined with the
index $1$) is mutually nested with a loop (underlined with index $2$). Our method in
Section \ref{inv} is not able to generate any candidate lemmas. Yet there are two
candidate lemmas for this case (with the proof of $e_2$
refer to $e_1$): 

\begin{hscode}\SaveRestoreHook
\column{B}{@{}>{\hspre}l<{\hspost}@{}}%
\column{3}{@{}>{\hspre}l<{\hspost}@{}}%
\column{E}{@{}>{\hspre}l<{\hspost}@{}}%
\>[3]{}\Varid{e}_{1}\mathbin{:}(\Conid{Eq}\;\Varid{x},\Conid{Eq}\;(\Conid{M}\;\Conid{G}\;\Conid{H}\;\Varid{x}))\Rightarrow \Conid{Eq}\;(\Conid{M}\;\Conid{H}\;\Conid{G}\;\Varid{x}){}\<[E]%
\\
\>[3]{}\Varid{e}_{2}\mathbin{:}\Conid{Eq}\;\Varid{x}\Rightarrow \Conid{Eq}\;(\Conid{M}\;\Conid{G}\;\Conid{H}\;\Varid{x}){}\<[E]%
\ColumnHook
\end{hscode}\resethooks
\noindent 
We would
like to improve our heuristics to allow generating multiple candidate lemmas, where
their corecursive evidences mutually refer to each other. 

There are situations where resolution is non-terminating but does not form any loop such
as $\Phi \vdash A \to^* \mathcal{C}[\sigma A]$. Consider the following program $\Phi_D$: 
\begin{hscode}\SaveRestoreHook
\column{B}{@{}>{\hspre}l<{\hspost}@{}}%
\column{3}{@{}>{\hspre}l<{\hspost}@{}}%
\column{E}{@{}>{\hspre}l<{\hspost}@{}}%
\>[3]{}\kappa_\mathrm{1}\mathbin{:}\Conid{D}\;\Varid{n}\;(\Conid{S}\;\Varid{m})\Rightarrow \Conid{D}\;(\Conid{S}\;\Varid{n})\;\Varid{m}{}\<[E]%
\\
\>[3]{}\kappa_\mathrm{2}\mathbin{:}\Conid{D}\;(\Conid{S}\;\Varid{m})\;\Conid{Z}\Rightarrow \Conid{D}\;\Conid{Z}\;\Varid{m}{}\<[E]%
\ColumnHook
\end{hscode}\resethooks
\noindent For query \ensuremath{\Conid{D}\;\Conid{Z}\;\Conid{Z}}, the resolution diverges without forming any loop. We would have
to introduce extra equality axioms in order to obtain finite corecursive evidence.\footnote{See the extended version for a solution in Haskell using type family and more discussion.} We would like
to investigate the ramifications of incorporating equality axioms in the corecursive resolution in the future. 

We plan to extend the observational equivalence result of Section
\ref{ob-equiv} to cope with more general notions of loop and extend our approach to 
allow intuitionistic formulas as candidate lemmas. 


Another avenue for future work is a formal proof that  the calculus of Definition~\ref{def:cresolution} is sound relative to the the greatest Herbrand models~\cite{Llo87},
 and therefore reflects the broader notion of the coinductive entailment for Horn clause logic as discussed in the introduction.

\subsection*{Acknowledgements} We want to thank Patricia Johann and the FLOPS reviewers for their helpful 
comments, Franti\v sek Farka for many discussions. Part of this work was funded
by the Flemish Fund for Scientific Research.

\bibliographystyle{plain}
\bibliography{paper}

\newpage
\appendix


\section{Proof of Theorem \ref{equiv}}\label{app:proof}






 
\begin{theorem}
Let $\Phi \vdash B \to^* \mathcal{C}[D_1, ..., D_n, \sigma B]$
with $|\mathcal{C}| = \emptyset$ and $D_i$ are normal for all $i$. Suppose $\Phi \vdash \sigma D_i \to^* \mathcal{C}_i[D_i]$, where $| \mathcal{C}_i | = \emptyset$ for any $D_i$. We have the following: 

\begin{enumerate}
\item $\Phi \vdash D_1, ..., D_n \Rightarrow B \Downarrow \mu \alpha . \lambda \alpha_1 .... \lambda \alpha_n . e$.
\item Let $e' \equiv \mu \alpha . \lambda \alpha_1 .... \lambda \alpha_n . e$. We have: 

$\mathcal{C}'[\sigma^m B] \in \mathcal{O}(B)_{\Phi}$ iff $\mathcal{C}'[e' \ \mathcal{C}_1^m[D_1] ...\ \mathcal{C}_n^m[D_n]] \in \mathcal{S}(e' \ \underline{D})$. 
\end{enumerate}
\end{theorem}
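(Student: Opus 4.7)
The plan is to prove Part 1 by an explicit construction of the corecursive evidence that mirrors the structure of the simple loop, and to prove Part 2 by induction on the iteration count $m$, exploiting the lockstep correspondence between ordinary resolution steps and $\mu/\beta$ evidence reduction steps.

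For Part 1, first I would apply the (Mu) rule to introduce the coinductive hypothesis $\alpha : D_1, \ldots, D_n \Rightarrow B$, then apply (Lam) to introduce the local assumptions $\alpha_i : D_i$ (with fresh eigenvariables instantiating the universally quantified variables). The remaining goal $B$ is then discharged by mimicking the reduction $B \to^* \mathcal{C}[D_1, \ldots, D_n, \sigma B]$ one resolution step at a time using the ordinary resolution rule of Definition~\ref{def:cresolution}. Each $D_i$ appearing in $\mathcal{C}$ is discharged by $\alpha_i$; the residual atom $\sigma B$ is discharged by the coinductive hypothesis $\alpha$ applied to evidences for $\sigma D_1, \ldots, \sigma D_n$, each of which is built by mimicking $\sigma D_i \to^* \mathcal{C}_i[D_i]$ and again closing with $\alpha_i$. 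The evidence body thus takes the form $e = \mathcal{C}[\alpha_1, \ldots, \alpha_n, \alpha\, \mathcal{C}_1[\alpha_1] \cdots \mathcal{C}_n[\alpha_n]]$ and the final witness is $e' = \mu \alpha.\lambda \alpha_1 \ldots \lambda \alpha_n. e$. The $\mathrm{HNF}$ side condition of (Mu) is satisfied because the first step of the loop $B \to^*\ldots$ necessarily applies some constant $\kappa$, so $\lambda \underline{\alpha}.e$ is of the form $\lambda \underline{\alpha}.\kappa\,\underline{q}$.

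For Part 2, I would proceed by induction on $m$, tracking the $m$-th observational point in the resolution trace and the $m$-th corecursive point in the evidence-reduction trace. The base case $m=0$ is immediate with $\mathcal{C}' = \bullet$: both sides give their respective starting terms ($B$ and $e'\,\underline{D}$). For the inductive step, on the resolution side, substitutivity of resolution by term-matching gives $\sigma^m B \to^* \mathcal{C}[\sigma^m D_1, \ldots, \sigma^m D_n, \sigma^{m+1} B]$ ($\mathcal{C}$ is atom-free, hence variable-free, so it is fixed by $\sigma^m$); iterating $\sigma D_i \to^* \mathcal{C}_i[D_i]$ inside each hole yields $\sigma^m D_i \to^* \mathcal{C}_i^m[D_i]$, giving the new outer context $\mathcal{C}'[\mathcal{C}[\mathcal{C}_1^m[D_1], \ldots, \mathcal{C}_n^m[D_n], \bullet]]$. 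On the evidence side, one $\mu$-transition followed by $n$ $\beta$-transitions on the outermost redex $e'\,\mathcal{C}_1^m[D_1] \cdots \mathcal{C}_n^m[D_n]$ substitutes $e'$ for $\alpha$ and $\mathcal{C}_i^m[D_i]$ for $\alpha_i$ in the body, producing $\mathcal{C}[\mathcal{C}_1^m[D_1], \ldots, \mathcal{C}_n^m[D_n], e'\,\mathcal{C}_1^{m+1}[D_1] \cdots \mathcal{C}_n^{m+1}[D_n]]$ after using $\mathcal{C}_i[\mathcal{C}_i^m[D_i]] = \mathcal{C}_i^{m+1}[D_i]$. The two outer contexts coincide, closing the induction and thus the iff.

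The main obstacle I anticipate is ensuring that the substitution and context manipulations commute cleanly: specifically, that atom-freeness of $\mathcal{C}$ and each $\mathcal{C}_i$ really does imply variable-freeness (so that $\sigma^k$ commutes with them), and that the outermost-redex discipline of Definition~\ref{label} visits exactly the corecursive points matching the observational points rather than some interleaving that exposes intermediate $\kappa$-applications as heads. A secondary bookkeeping challenge is phrasing the induction invariant carefully enough that the nested compositions $\mathcal{C}_i^m$ built up by iterated evidence reduction are provably the same as those built up by iterating $\sigma D_i \to^* \mathcal{C}_i[D_i]$ in resolution.
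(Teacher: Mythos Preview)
Your proposal is correct and follows essentially the same route as the paper's proof: for Part~1 you build the witness by applying \textsc{Mu} then \textsc{Lam} and replay the loop reduction in the extended environment to obtain the body $e = \mathcal{C}[\alpha_1,\ldots,\alpha_n,\ \alpha\,\mathcal{C}_1[\alpha_1]\cdots\mathcal{C}_n[\alpha_n]]$, which is exactly the evidence the paper constructs; for Part~2 you argue by induction on $m$, unfolding one loop iteration on the resolution side and one $\mu/\beta$ cycle on the evidence side to see that the surrounding contexts coincide, which is again the paper's argument. The commutation concerns you flag (atom-freeness of $\mathcal{C},\mathcal{C}_i$ forcing $\sigma$-invariance, and the outermost discipline hitting the right redexes) are handled only implicitly in the paper as well, so your level of rigor already matches it.
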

\begin{proof}

  \begin{enumerate}
  \item We have the following finite derivation. 

    \begin{center}
\[
      \begin{array}{c}
        \infer{\Phi \vdash \underline{D} \Rightarrow B \Downarrow
          \mu \alpha . \lambda \alpha_1 .... \lambda \alpha_n
            . e} { \infer{\Phi, \alpha : \underline{D}
            \Rightarrow B \vdash \underline{D} \Rightarrow B
            \Downarrow \lambda \alpha_1 .... \lambda \alpha_n
            . e}{\infer{\Phi, \alpha : \underline{D} \Rightarrow B,
              \underline{\alpha} : \underline{D} \vdash B \Downarrow
              e}{\infer{...}{}}}}
      \end{array} \]
    \end{center}

\noindent By Theorem \ref{func:eq}, we just need to reduce $B$ to a proof term using
    the rules $\Psi_1 = \Phi, \alpha : \underline{D} \Rightarrow B, \underline{\alpha} : \underline{D}$. 
We have the following reduction:

    \begin{center}
      $ \Psi_1 \vdash B \to^* \mathcal{C}[D_1, ..., D_n, (\sigma B)]
      \to^* \mathcal{C}[\alpha_1 ... \alpha_n, (\sigma B)] \to
      \mathcal{C}[\alpha_1, ... ,\alpha_n, (\alpha\ (\sigma\ D_1)
      ... (\sigma\ D_n)) ] \to^* \mathcal{C}[ \alpha_1, ... \alpha_n,
      (\alpha\ \mathcal{C}_1[D_1] ... \mathcal{C}_n[D_n])] \to
      \mathcal{C}[\alpha_1 ... \alpha_n (\alpha\
      \mathcal{C}_1[\alpha_1] ...\ \mathcal{C}_n[\alpha_n])]$
    \end{center}
    
    Thus we have the corecursive evidence 
    $\mu \alpha .\lambda \alpha_1 . .. \alpha_n
    . e$ for $ D_1, ...,
    D_n \Rightarrow B$, and $e \equiv \mathcal{C}[ \alpha_1, ... \alpha_n, (\alpha\
    \mathcal{C}_1[\alpha_1]\ ...\ \mathcal{C}_n[\alpha_n])]$.
  \item Using the same notation in (1), let $e' \equiv \mu \alpha .\lambda \alpha_1 . .. \alpha_n
    . e$, we can observe following equivalence reduction traces:
 
  \begin{center}
  $\Phi \vdash B \to^* \mathcal{C}[D_1, ..., D_n, (\sigma B)] \to^* \mathcal{C}[D_1, ..., D_n, \mathcal{C}[\mathcal{C}_1[D_1], ..., \mathcal{C}_n[D_n], (\sigma^2 B)]] \to ...$
   
\

    $\ (e'\ D_1 ...\ D_n) \leadsto^* \mathcal{C}[ D_1, ... D_n, (e'\ \mathcal{C}_1[D_1]\ ...\ \mathcal{C}_n[D_n])] \leadsto^* \mathcal{C}[ D_1, ..., D_n, \mathcal{C}[ \mathcal{C}_1[D_1], ..., \mathcal{C}_n [D_n], (e'\ (\mathcal{C}_1[\mathcal{C}_1[D_1]])\ ...\ (\mathcal{C}_n[\mathcal{C}_n[D_n]]))]] \leadsto^* ...$
  \end{center}
We proceed by induction on $m$. When $m = 0$, it is obvious. Let $m = k + 1$. By IH, 
we know that $\mathcal{C}'[\sigma^k B] \in \mathcal{O}(B)_{\Phi}$ iff $\mathcal{C}'[e' \ \mathcal{C}_1^k[D_1] ...\ \mathcal{C}_n^k[D_n]] \in \mathcal{S}(e'\ \underline{D})$. We have $\Phi \vdash \mathcal{C}'[\sigma^k B] \to^* \mathcal{C}'[\mathcal{C}[\sigma^k D_1, ..., \sigma^k D_n, (\sigma^{k+1} B)]] \to^*$

$\mathcal{C}'[\mathcal{C}[\mathcal{C}_n^k[D_1], ..., \mathcal{C}_n^k[D_n], (\sigma^{k+1} B)]]$.

On the other hand, $\mathcal{C}'[e' \ \mathcal{C}_1^k[D_1] ...\ \mathcal{C}_n^k[D_n]] \leadsto^*$

$\mathcal{C}'[\mathcal{C}[\mathcal{C}_n^k[D_1], ..., \mathcal{C}_n^k[D_n], (e' \ \mathcal{C}_1^{k+1}[D_1] ...\ \mathcal{C}_n^{k+1}[D_n])]]$. Thus we have the observational equivalence. 
  \end{enumerate}
\end{proof}

\section{Weak Head Normalization of Corecursive Evidence}
\label{ap:guard}
\begin{definition}
\label{syntax}
\[
\begin{array}{l@{\hspace{5mm}}l@{\hspace{5mm}}c@{\hspace{5mm}}l}
    \text{Formula}         &  F, G & ::= & A ~\mid~ \forall x . F ~\mid~ \ F \Rightarrow F' \\
    \text{Evidence/Proofs} &  e    & ::= & \kappa ~\mid~ \alpha ~\mid~ \lambda \alpha . e ~\mid~ e \ e' ~\mid~
\mu \alpha . e  \\
    \text{Contexts/Axioms}  &  \Phi & ::= & \cdot ~\mid~ e : F, \Phi
\end{array}
\]
\end{definition}

We write $ G_1, ..., G_n \Rightarrow
A$ as a shorthand for $ G_1 \Rightarrow  ... \Rightarrow
G_n \Rightarrow A$. Note that Horn formulas are special case of formulas. 

\begin{definition}
  Weak Head reduction context $ \mathcal{C} \  ::=  \ \bullet \ | \ \mathcal{C}\ e$
\end{definition}

\begin{definition}[Weak Head Reduction]
  \label{label}
  \fbox{$ e \leadsto e'$}
\[\begin{array}{c}

\infer{{\mathcal{C}[\mu \alpha . e]} \leadsto_\mu \mathcal{C}[[\mu \alpha . e/\alpha]e]}{} 
\quad \quad
\infer{\mathcal{C}[(\lambda \alpha . e) \ e'] \leadsto_\beta \mathcal{C}[[e'/\alpha] e] }{}

  \end{array}
\]

\end{definition}

Note that weak head reduction context do not allow reduction under the constant $\kappa$. It is 
more restricted than the context in Section \ref{preliminary}.

\begin{definition}[General Corecursive Resolution]

\[\begin{array}{c}
  \infer[\text{if}~(e : G_1,..., G_m \Rightarrow A) \in \Phi]
    {\Phi \vdash \sigma A \Downarrow \kappa~e_1 \cdots e_n}
    { \Phi \vdash \sigma G_1 \Downarrow e_1 \quad \cdots \quad \Phi \vdash \sigma G_n \Downarrow e_n 
      } 
\\ \\ 
    \infer
    {\Phi \vdash F \Downarrow \mu \alpha . e }{\Phi, \alpha : F \vdash F \Downarrow e & \mathrm{HNF}(e)} 
\quad\quad
    \infer
    {\Phi \vdash \underline{G} \Rightarrow B \Downarrow \lambda \underline{\alpha} . e}
    {\Phi, \underline{\alpha} : \underline{G} \vdash  B \Downarrow e}
  \end{array}
\]
\end{definition}


\begin{definition}[Howard's Type System]
\label{proofsystem}
\[
\begin{array}{c}
\infer[(\textsc{Assump})]{\Phi \vdash a : F}{(a :  F) \in \Phi}    
\quad\quad
\infer[(\textsc{App})]{\Phi \vdash e_2\ e_1 : F}{\Phi \vdash e_1 : F' & \Phi \vdash e_2 : F' \Rightarrow F}
\\
\\

\infer[(\textsc{Abs})]{\Phi \vdash \lambda \alpha.e : F' \Rightarrow F}{\Phi, \alpha : F' \vdash e : F}
\quad\quad
\infer[(\textsc{Gen})]{\Phi \vdash e: \forall x . F}{\Phi \vdash e :  F & x \notin \mathrm{FV}(\Phi)}

\\
\\
\infer[(\textsc{Inst})]{\Phi \vdash e : [t/x]F}{\Phi \vdash e : \forall x . F}
\quad\quad
\infer[(\textsc{Mu})]{\Phi \vdash \mu \alpha . e : F}{\Phi, \alpha : F \vdash e : F & \mathrm{HNF}(e)}
  \end{array}  
\]
\end{definition}

\begin{theorem}
  If $\Phi \vdash F \Downarrow e$, then $\Phi \vdash e : F$. 
\end{theorem}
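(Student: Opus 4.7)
The proof will proceed by structural induction on the derivation $\Phi \vdash F \Downarrow e$ in the general corecursive resolution system, dispatching one case per resolution rule. Since the three rules produce evidence of three distinct syntactic shapes ($\kappa$/variable applied to arguments, $\mu\alpha.e$, $\lambda\underline{\alpha}.e$), each case will be discharged by applying the corresponding rule of Howard's type system together with the inductive hypothesis.

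For the \textsc{Resolution} case, we have $(e_0 : G_1,\ldots,G_m \Rightarrow A) \in \Phi$ and the conclusion $\Phi \vdash \sigma A \Downarrow e_0\,e_1\cdots e_n$ with premises $\Phi \vdash \sigma G_i \Downarrow e_i$. First I would invoke \textsc{Assump} to get $\Phi \vdash e_0 : \underline{G} \Rightarrow A$. Because the free term variables of the axiom are implicitly universally quantified (as stipulated early in Section~\ref{preliminary}), I would then apply \textsc{Gen} followed by \textsc{Inst} repeatedly to realize the substitution $\sigma$, obtaining $\Phi \vdash e_0 : \sigma G_1,\ldots,\sigma G_m \Rightarrow \sigma A$. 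The inductive hypothesis gives $\Phi \vdash e_i : \sigma G_i$, so $n$ applications of \textsc{App} yield the desired $\Phi \vdash e_0\,e_1\cdots e_n : \sigma A$.

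For the \textsc{Mu} case, the shape of the resolution rule matches the \textsc{Mu} rule of the type system almost verbatim: from the premise $\Phi,\alpha:F \vdash F \Downarrow e$ with $\mathrm{HNF}(e)$, the induction hypothesis supplies $\Phi,\alpha:F \vdash e:F$, and applying type-theoretic \textsc{Mu} (with the same head-normal-form side condition) gives $\Phi \vdash \mu\alpha.e : F$. The \textsc{Lam} case is analogous: from $\Phi,\underline{\alpha}:\underline{G} \vdash B \Downarrow e$ the IH gives $\Phi,\underline{\alpha}:\underline{G} \vdash e:B$, and $|\underline{\alpha}|$ applications of \textsc{Abs} yield $\Phi \vdash \lambda\underline{\alpha}.e : \underline{G}\Rightarrow B$.

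The main obstacle I anticipate is the treatment of eigenvariables in the \textsc{Lam} rule, which the excerpt describes as replacing the universally quantified term variables of $\underline{G}\Rightarrow B$ with fresh term constants before entering the subderivation. To line this up with Howard's \textsc{Gen}/\textsc{Inst}, I will need a small bookkeeping lemma of the form: if $\Phi \vdash e : [\vec{c}/\vec{x}]F$ and $\vec{c}$ are fresh constants not occurring in $\Phi$, then $\Phi \vdash e : \forall \vec{x}.F$ (obtained by replaying the derivation with $\vec{c}$ renamed back to $\vec{x}$ and then applying \textsc{Gen}, whose side condition is met by freshness). This is also what justifies the \textsc{Inst} step in the Resolution case, so proving it once cleanly up front is the only delicate point; the remaining structural work is a routine rule-by-rule match.
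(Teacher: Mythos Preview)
Your proposal is correct and takes essentially the same approach as the paper: the paper's proof is the one-liner ``By induction on the derivation of $\Phi \vdash F \Downarrow e$,'' and your case analysis spells out exactly that induction, matching each resolution rule to \textsc{Assump}/\textsc{Inst}/\textsc{App}, \textsc{Mu}, and \textsc{Abs} in Howard's system. Your extra care with the eigenvariable bookkeeping (the fresh-constants-to-\textsc{Gen} lemma) is a point the paper simply leaves implicit, so if anything you are being more thorough than the original.
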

\begin{proof}
  By induction on the derivation of $\Phi \vdash F \Downarrow e$. 
\end{proof}

\begin{theorem}
  If $\Phi \vdash e : F$, then $e$ is terminating with respect to weak head reduction. 
\end{theorem}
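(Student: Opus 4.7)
The plan is a Girard--Tait style logical relations argument, tailored to weak-head reduction and to the guardedness condition encoded by $\mathrm{HNF}$. First I would observe that weak-head reduction is deterministic: the $\mu$- and $\beta$-rules fire on disjoint head shapes, and the context $\mathcal{C} ::= \bullet \mid \mathcal{C}\,e$ picks a unique redex. Hence ``terminating'' coincides with weak normalisation (written $\mathrm{WN}$): the absence of an infinite $\leadsto$-chain. For each formula $F$ I would then define a reducibility predicate $\mathcal{R}_F$ by induction on $F$, putting $e \in \mathcal{R}_A$ iff $e \in \mathrm{WN}$; $e \in \mathcal{R}_{F' \Rightarrow F}$ iff $e \in \mathrm{WN}$ and $e\,e' \in \mathcal{R}_F$ for every $e' \in \mathcal{R}_{F'}$; and $e \in \mathcal{R}_{\forall x.F}$ iff $e \in \mathcal{R}_{[t/x]F}$ for every term $t$. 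Two standard inductions on $F$ deliver the auxiliary facts I need, namely that $\mathcal{R}_F \subseteq \mathrm{WN}$ and that $\mathcal{R}_F$ is closed under backward weak-head reduction.

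The heart of the argument is the fundamental lemma: whenever $\Phi \vdash e : F$ and $\sigma$ is a substitution sending each hypothesis $\alpha : G$ in $\Phi$ to some element of $\mathcal{R}_G$, then $\sigma e \in \mathcal{R}_F$. I would prove this by induction on the typing derivation. The rules \textsc{Assump}, \textsc{App}, \textsc{Gen}, and \textsc{Inst} are direct from the definition, and \textsc{Abs} goes through using one $\beta$-step together with backward closure. Axiom constants $\kappa$ are uniformly reducible at any type, since the context $\mathcal{C}$ never descends into the arguments of a $\kappa$: a term of shape $\kappa\,e_1 \cdots e_n$ is already in weak-head normal form and remains so under further application. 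The only genuinely nontrivial case, and the main obstacle, is \textsc{Mu}: a naive use of the induction hypothesis would require the very conclusion $\mu\alpha.\sigma e \in \mathcal{R}_F$ in order to build the extended substitution that sends $\alpha$ to $\mu\alpha.\sigma e$, which is circular.

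To sidestep the circularity I would exploit the $\mathrm{HNF}$ side-condition directly, without ever appealing to the induction hypothesis. By $\mathrm{HNF}$ the body has the shape $\sigma e = \lambda\alpha_1 \cdots \lambda\alpha_m.\,\kappa\,\underline{e''}$ for some $m \geq 0$. Given $F = F_1 \Rightarrow \cdots \Rightarrow F_n \Rightarrow A$ with $A$ atomic, and any arguments $e_1, \ldots, e_n$ drawn from the appropriate reducibility sets, one computes
\[
(\mu\alpha.\sigma e)\,e_1\cdots e_n \;\leadsto_\mu\; (\lambda\alpha_1 \cdots \lambda\alpha_m.\,\kappa\,\underline{e''}[\mu\alpha.\sigma e/\alpha])\,e_1 \cdots e_n \;\leadsto_\beta^{\min(m,n)}\; t,
\]
where $t$ equals $\kappa\,\underline{s}\,e_{m+1}\cdots e_n$ if $m \leq n$, and $\lambda\alpha_{n+1}\cdots\lambda\alpha_m.\,\kappa\,\underline{s}$ otherwise. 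Either way $t$ is in weak-head normal form: its head is $\kappa$ or $\lambda$, and the copies of $\mu\alpha.\sigma e$ inside $\underline{s}$ sit under a $\kappa$, where $\mathcal{C}$ never reaches. Thus the application terminates in $\min(m,n){+}1$ steps independently of the choice of $e_i$, giving $(\mu\alpha.\sigma e)\,e_1\cdots e_n \in \mathrm{WN}$; backward closure then lifts this to $\mu\alpha.\sigma e \in \mathcal{R}_F$. The theorem follows by specialising the fundamental lemma to the identity substitution, observing that each free hypothesis variable $\alpha$ is trivially reducible because $\alpha\,e_1 \cdots e_n$ is already in weak-head normal form.
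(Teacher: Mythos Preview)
Your proposal is correct and follows essentially the same approach as the paper's own proof: a Tait--Girard reducibility argument where the only nontrivial case is \textsc{Mu}, handled by exploiting the $\mathrm{HNF}$ side-condition so that after one $\mu$-unfolding and the available $\beta$-steps the head becomes a $\kappa$ (or a residual $\lambda$), which is a weak-head normal form because the restricted context $\mathcal{C} ::= \bullet \mid \mathcal{C}\,e$ never descends under $\kappa$. The paper's proof is only a sketch stating exactly this idea; your write-up simply supplies the details (definition of $\mathcal{R}_F$, closure properties, the case analysis for \textsc{Mu}). One minor remark: the final appeal to ``backward closure'' is not really what establishes $\mu\alpha.\sigma e \in \mathcal{R}_F$; rather, your termination argument already applies uniformly to every partial application $(\mu\alpha.\sigma e)\,e_1\cdots e_k$ for $0 \le k \le n$, which is precisely what the unfolded definition of $\mathcal{R}_{F_1 \Rightarrow \cdots \Rightarrow F_n \Rightarrow A}$ demands.
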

\begin{proof} Sketch. The proof is very similar to standard normalization proof for simply typed
lambda calculus. The only tricky rule is the \textsc{Mu} rule: 

\

\infer{\Phi \vdash \mu \alpha. e : \underline{G} \Rightarrow A}{\Phi, \alpha :  \underline{G} \Rightarrow A \vdash e : \underline{G} \Rightarrow A & \mathrm{HNF}(e)}

\

\noindent We want to show $\mu \alpha. e$ is in the reducible set of type $\underline{G} \Rightarrow A$. Since $e = \lambda \underline{\alpha}. \kappa\ \underline{e}$, we just need to show for any reducible $\underline{u}$ of type $\underline{G}$, we have $(\mu \alpha. e) \ \underline{u}\leadsto_\mu (\lambda \underline{\alpha}. \kappa\ [\mu \alpha. e/\alpha]\underline{e})\ \underline{u}$ is terminating. This is the case due to the expression is
guarded by $\kappa$. 
\end{proof}

\section{Examples}
\label{examples}

In this section, we show several examples with the prototype that we developed. They are
also available from the \texttt{examples} directory in \url{https://github.com/Fermat/corecursive-type-class}.

\subsection{Example 1}
\begin{verbatim}
module bush where
axiom (Eq a, Eq (f (f a))) => Eq (HBush f a)
axiom Eq (f (Mu f) a) => Eq (Mu f a)
axiom Eq Unit
auto Eq (Mu HBush Unit)
\end{verbatim}

The keyword \texttt{axiom} introduces an axiom and the keyword \texttt{auto}
requests the system to prove the formula automatically using the heuristic
corecursive hypothesis generation that we described in Section \ref{inv}. If we
save the above code in the \texttt{bush.asl} file, and, at the command line, type
\texttt{asl bush.asl}, then we get the following output:

\begin{verbatim}
Parsing success! 
Type Checking success! 
Program Definitions
  Ax0 :: (Eq (f (Mu f) a)) => Eq (Mu f a)
  = Ax0 
  Ax1 :: (Eq a, Eq (f (f a))) => Eq (HBush f a)
  = Ax1 
  Ax2 :: Eq Unit
  = Ax2 
  genLemm4 :: (Eq var_1) => Eq (Mu HBush var_1)
  = \ b0 . Ax0 (Ax1 b0 (genLemm4 (genLemm4 b0))) 
  goalLem3 :: Eq (Mu HBush Unit)
  = genLemm4 Ax2 
Axioms
  Ax2 :: Eq Unit
  Ax1 :: (Eq a, Eq (f (f a))) => Eq (HBush f a)
  Ax0 :: (Eq (f (Mu f) a)) => Eq (Mu f a)
Lemmas
  goalLem3 :: Eq (Mu HBush Unit)
  genLemm4 :: (Eq var_1) => Eq (Mu HBush var_1)
\end{verbatim}

The corecursive hypothesis generated is  \texttt{genLemm4 :: (Eq var\string_1) => Eq (Mu HBush var\string_1)}, its proof is \texttt{\string\ b0 . Ax0 (Ax1 b0 (genLemm4 (genLemm4 b0)))}. The proof
for \texttt{Eq (Mu HBush Unit)} is \texttt{genLemm4 Ax2}. 

Using \texttt{axiom} and \texttt{auto} allows us to quickly experiment with
different small examples. Here is the corresponding type-class code. 
\begin{verbatim}
module bush where
data Unit where
  Unit :: Unit

data Maybe a where
  Nothing :: Maybe a
  Just :: a -> Maybe a

data Pair a b where
  Pair :: a -> b -> Pair a b

data HBush f a where
  HBLeaf :: HBush f a
  HBNode ::  a -> (f (f a)) -> HBush f a

data Mu f a where
  In :: f (Mu f) a -> Mu f a
  
data Bool where
     True :: Bool
     False :: Bool

class Eq a where
   eq :: Eq a => a -> a -> Bool

and = \ x y . case x of
                True -> y
                False -> False

instance  => Eq Unit where
   eq = \ x y . case x of
                   Unit -> case y of 
                              Unit -> True

instance Eq a, Eq (f (f a)) => Eq (HBush f a) where
  eq = \ x y . case x of
                 HBLeaf -> case y of
                            HBLeaf -> True
                            HBNode a c -> False
                 HBNode a c1 -> case y of
                              HBLeaf -> False
                              HBNode b c2  -> and (eq a b) (eq c1 c2)

instance Eq (f (Mu f) a) => Eq (Mu f a) where
   eq = \ x y . case x of
                  In s -> case y of
 		            In t -> eq s t

term1 = In HBLeaf
term2 = In (HBNode Unit term1)
test = eq term2 term1
test1 = eq term2 term2
reduce test
reduce test1
\end{verbatim}

Inspecting the output of this code, we see that the result of the evaluation of 
\texttt{test} (resp. \texttt{test1}) is \texttt{False} (resp. \texttt{True}). It is quite
verbose and probably irrelevant to see the type-class code, so in the following
we will show examples using only the \texttt{axiom}, \texttt{auto} and \texttt{lemma} keywords.

\subsection{Example 2}

\begin{verbatim}
module lam where
axiom Eq (f (Mu f) a) => Eq (Mu f a)
axiom (Eq a, Eq (f a), Eq (f a), Eq (f (Maybe a))) => Eq (HLam f a)
axiom Eq Unit
axiom Eq a => Eq (Maybe a)
lemma (Eq x) => Eq (Mu HLam x)
lemma Eq (Mu HLam Unit)
\end{verbatim}

Of course, our heuristic \texttt{auto Eq (Mu HLam Unit)} also works for this case. But
we can specify the corecursive hypothesis as lemma and guided our mini-prover to prove the 
final goal \texttt{Eq (Mu HLam Unit)}.

\subsection{Example 3}
Are there any examples where \texttt{auto} fails, but where we can come to the rescue with a \texttt{lemma}? 
The answer is yes. 

\begin{verbatim}
axiom (Eq a, Eq (Pair (f1 a) (f2 a))) => Eq (H1 f1 f2 a)
axiom Eq (Pair (g a) (f (g a))) => Eq (H2 f g a)
axiom Eq (h1 (Mu h1 h2) (Mu h2 h1) a) => Eq (Mu h1 h2 a)
axiom (Eq a, Eq b) => Eq (Pair a b)
axiom Eq Unit
auto Eq (Mu H1 H2 Unit)
\end{verbatim}
If we run this code, our mini-prover diverges, since this example require multiple lemmas
in order to prove the final goal \texttt{Eq (Mu H1 H2 Unit)}. 

\begin{verbatim}
axiom (Eq a, Eq (Pair (f1 a) (f2 a))) => Eq (H1 f1 f2 a)
axiom Eq (Pair (g a) (f (g a))) => Eq (H2 f g a)
axiom Eq (h1 (Mu h1 h2) (Mu h2 h1) a) => Eq (Mu h1 h2 a)
axiom (Eq a, Eq b) => Eq (Pair a b)
axiom Eq Unit
lemma (Eq x, Eq (Mu H2 H1 x)) => Eq (Mu H1 H2 x)
lemma Eq x => Eq (Mu H2 H1 x)
lemma Eq (Mu H1 H2 Unit)
\end{verbatim}

\subsection{Example 4}
\label{DZ}
Are there any examples where even \texttt{lemma} does not work? We believe that
the following is such an example.
\begin{verbatim}
axiom D n (S m) => D (S n) m
axiom D (S m) Z => D Z m
\end{verbatim}

Note that \texttt{auto D Z Z} will not work because the corecursive hypothesis
generated by our method is not provable. The following is a solution in Haskell
using type families. We want to point out that using type families is a way to
introduce equality axioms for addition, and these equality axioms are not
derivable from the original axioms. The ability to learn addition seems to
require a higher notion of intelligence.    

\begin{verbatim}
{-# LANGUAGE Rank2Types, TypeFamilies, UndecidableInstances #-}
data Z
data S n
data D a b
type family Add m n
type instance Add n Z  =  n
type instance Add m (S n) = Add (S m) n

k1 :: D n (S m) -> D (S n) m
k1 = undefined
k2 :: D (S m) Z -> D Z m
k2 = undefined

f :: (forall n. D n (S m) -> D (S (Add m n)) Z) -> D Z m
f g = k2 (g (f (g . k1)))
\end{verbatim}

\end{document}